\DeclareMathAlphabet{\mathcal}{OMS}{cmsy}{m}{n}
\newcommand{\vs}{\vspace{1.5mm}}
\newtheorem{theorem}{Theorem}[section]
\newtheorem{lemma}[theorem]{Lemma}
\newtheorem{definition}[theorem]{Definition}
\newcommand{\G}{\mathbb{G}}
\newcommand{\Z}{\mathbb{Z}}
\newcommand{\mcA}{\mathcal{A}}
\newcommand{\mcB}{\mathcal{B}}
\newcommand{\mcC}{\mathcal{C}}
\newcommand{\Adv}{\textbf{Adv}}
\newcommand{\db}{\displaybreak[0]}
\title{Anonymous HIBE with Short Ciphertexts:\\ Full Security in Prime
    Order Groups\footnote{
        This work was partly supported by
        the MSIP (Ministry of Science, ICT \& Future Planning), Korea, under
        the C-ITRC (Convergence Information Technology Research Center)
        support program (NIPA-2013-H0301-13-3007) supervised by the NIPA
        (National IT Industry Promotion Agency)
        and
        the IT R\&D program of MOTIE/KEIT [KI002113, Development of Security
        Technology for Car-Healthcare].}
    }
\author{
    Kwangsu Lee\footnote{Korea University, Korea and Columbia University, USA.
        Email: \texttt{guspin@korea.ac.kr}.}
    \and
    Jong~Hwan Park\footnote{Korea University, Korea and Sangmyung University,
        Korea. Email: \texttt{decartian@korea.ac.kr}.}
    \and
    Dong~Hoon Lee\footnote{Korea University, Korea.
        Email: \texttt{donghlee@korea.ac.kr}.}
}
\date{}
\begin{document}

\maketitle

\begin{abstract}
Anonymous Hierarchical Identity-Based Encryption (HIBE) is an extension of
Identity-Based Encryption (IBE), and it provides not only a message hiding
property but also an identity hiding property. Anonymous HIBE schemes can be
applicable to anonymous communication systems and public key encryption
systems with keyword searching. However, previous anonymous HIBE schemes have
some disadvantages that the security was proven in the weaker model, the size
of ciphertexts is not short, or the construction was based on composite order
bilinear groups. In this paper, we propose the first efficient anonymous HIBE
scheme with short ciphertexts in prime order (asymmetric) bilinear groups,
and prove its security in the full model with an efficient reduction. To
achieve this, we use the dual system encryption methodology of Waters. We
also present the benchmark results of our scheme by measuring the performance
of our implementation.
\end{abstract}

\vs \noindent {\bf Keywords:} Identity-based encryption, Hierarchical
identity-based encryption, Anonymity, Full model security, Bilinear maps.

\newpage

\section{Introduction}

Hierarchical Identity-Based Encryption (HIBE) is an extension of
Identity-Based Encryption (IBE) that uses an identity as a public key. In
HIBE, a user's identity is represented as a hierarchical tree structure and
an upper level user can delegate the private key generation capability to a
lower level user. Horwitz and Lynn introduced the concept of HIBE to reduce
the burden of the private-key generator of IBE \cite{HorwitzL02}. After the
introduction of HIBE, it was shown that HIBE can have various applications
like identity-based signature \cite{GentryS02}, public-key broadcast
encryption \cite{DodisF02}, forward-secure public key encryption
\cite{CanettiHK03}, and chosen-ciphertext secure HIBE \cite{CanettiHK04}.

Recently, as a result of the increasing concern with users' privacy, the need
for cryptographic systems that protect users' privacy also increases.
Anonymous HIBE can provide users' privacy by supporting not only the
\textit{message hiding} property but also the \textit{identity hiding}
property that hides identity information in ciphertexts. Abdalla et al.
formalized the concept of anonymous HIBE \cite{AbdallaBC+05}. After that,
Boyen and Waters proposed the first secure anonymous HIBE scheme without
random oracles \cite{BoyenW06}. The main applications of anonymous HIBE are
anonymous communication systems that provide anonymity between a received
message and a true sender and public key encryption systems with keyword
searching that enable keyword searches on encrypted data \cite{BonehCOP04}.

The security model of anonymous HIBE is defined as a game between a
challenger and an adversary. In this game, the adversary adaptively requests
private keys in the private key query step and selects two hierarchical
identities $ID_0, ID_1$ and two messages $M_0, M_1$ in the challenge step.
Next, the adversary is given a challenge ciphertext of $ID_{\gamma},
M_{\gamma}$ where ${\gamma}$ is a random bit chosen by the challenger. The
adversary wins the game if he can correctly guess $\gamma$. The security
model is divided as a selective model where the adversary should commit the
target hierarchical identities in the initial step and a full model where the
adversary can select the target hierarchical identities in the challenge
step. Generally a selectively secure HIBE scheme is converted to a fully
secure HIBE scheme, but the reduction is inefficient \cite{BonehB04e}. The
efficiency of the reduction is important not only for theoretical reasons but
also for practical reasons.

Let $\Adv_{\mcA}$ be the advantage of an adversary $\mcA$ that breaks a
scheme and $\Adv_{\mcB}$ be the advantage of an algorithm $\mcB$ that breaks
an assumption using the adversary $\mcA$. Suppose that $\Adv_{\mcA} \leq L
\cdot \Adv_{\mcB}$ where $L$ is a reduction loss. Let $\lambda, k$ be the
security level of the scheme and the assumption, respectively. If the
assumption provides the $k$-bit security, then it guarantees that
$\Adv_{\mcB} \leq 1/2^k$ for any PPT algorithm $\mcB$. Then we can derive
$\Adv_{\mcA} \leq L \cdot 1/2^k$ from two inequalities $\Adv_{\mcA} \leq L
\cdot \Adv_{\mcB}$ and $\Adv_{\mcB} \leq 1/2^k$. To construct the scheme that
provides the $\lambda$-bit security, it should be guaranteed that
$\Adv_{\mcA} \leq 1/2^{\lambda}$ for any PPT adversary $\mcA$. It is easy to
achieve this by setting $L \cdot 1/2^k \leq 1/2^{\lambda}$ since $\Adv_{\mcA}
\leq L \cdot 1/2^k$. Thus we can derive a relation $k \geq \lambda +
\log_2(L)$. This relation says that the bit size $k$ of a group order for the
assumption should be larger than $\lambda + \log_2(L)$ to construct the
scheme with the $\lambda$-bit security. For example, if there is a
selectively secure scheme with a hierarchical depth $l=10$, then we should
select $k = 880$ since $\lambda = 80$ and $L=2^{\lambda l}$. Therefore, an
ideal anonymous HIBE scheme should be fully secure with a reduction loss less
than $c \cdot q$ for a polynomial value $q$ and a constant $c$.

To construct a fully secure HIBE scheme with an efficient reduction, the new
proof methodology named the dual system encryption method was proposed by
Waters \cite{Waters09}. In the dual system encryption method, ciphertexts and
private keys can be a normal type or a semi-functional type, and the
semi-functional types of ciphertexts and private keys are only used in
security proofs. Additionally, the normal type and the semi-functional type
are indistinguishable, and the semi-functional ciphertexts are not decrypted
by using the semi-functional private keys. The proof of the dual system
encryption method consists of hybrid games that change a normal ciphertext
and normal private keys to a semi-functional ciphertext and semi-functional
private keys. Using this methodology, Waters proposed a fully secure HIBE
scheme with linear-size ciphertexts and a fully secure HIBE scheme with
constant-size ciphertexts \cite{Waters09,LewkoW10}. The dual system
encryption method can be used to prove the security of fully secure
attribute-based encryption \cite{LewkoOSTW10}, fully secure predicate
encryption \cite{OkamotoT10}, and leakage-resilient cryptography
\cite{LewkoRW11}.

The first secure anonymous HIBE scheme was proposed by Boyen and Waters
\cite{BoyenW06}, and it was proven to be selectively secure without random
oracles. After the first construction of anonymous HIBE, several anonymous
HIBE schemes were presented, but they were only proved to be secure in the
selective model \cite{SeoKOS09,Ducas10,LeeL10}. Recently, De Caro et al.
proposed a fully secure anonymous HIBE scheme with short ciphertexts by using
the dual system encryption method \cite{CaroIP10}. However, their scheme is
inefficient since the scheme is based on composite order groups where the
group order is a product of four prime numbers. One may use the conversion
method of Freeman \cite{Freeman10} to construct a scheme in prime order
groups from a scheme in composite order groups, but this method can not be
applied to the dual system encryption method of Lewko and Waters
\cite{LewkoW10} since it does not provide the parameter hiding property in
composite order groups\footnote{Lewko and Waters used the parameter hiding
property of composite order groups to prove the full security of their HIBE
scheme using the dual system encryption technique \cite{LewkoW10}. The
parameter hiding property of composite order $N=pqr$ is stated that an
exponent $\Z_N$ has one-to-one correspondence with $(\Z_p, \Z_q, \Z_r)$
because of Chinese Remainder Theorem (CRT) and $\Z_q$ and $\Z_r$ values are
information theoretically hidden to an adversary even if $\Z_p$ value is
revealed to the adversary.}. Lewko recently devised another conversion method
for the dual system encryption method and constructed a (non-anonymous)
unbounded HIBE scheme with linear-size ciphertexts in prime order groups
\cite{Lewko12}. However, this method is not known to be applicable for the
construction of an anonymous HIBE scheme with constant-size ciphertexts since
it uses dual pairing vector spaces (DPVS)\footnote{The dimensions of DPVS is
generally proportional to the size of an identity vector in the scheme that
uses DPVS \cite{OkamotoT09,Lewko12,OkamotoT12}. Thus an HIBE scheme based on
DPVS that supports $l$-depth has linear-size of ciphertexts since it requires
at least $l$-dimensions in DPVS. To reduce the dimensions of DPVS, one may
try to use the technique of Okamoto and Takashima \cite{OkamotoT11}, but it
only applied to non-anonymous schemes since it should reveal the identity of
ciphertexts.}.

Anonymous HIBE can also be constructed from Predicate Encryption (PE) with
the delegation capability. Shi and Waters constructed an anonymous HIBE
scheme with linear-size ciphertexts from a delegatable Hidden Vector
Encryption (dHVE) scheme \cite{ShiW08} and Okamoto and Takashima constructed
an anonymous HIBE scheme with linear-size ciphertexts from a Hierarchical
Inner Product Encryption (HIPE) scheme \cite{OkamotoT09,LewkoOSTW10,
OkamotoT10,OkamotoT12}. However, currently known anonymous HIBE schemes from
PE schemes with the delegation capability only have linear-size ciphertexts.
It is also possible to derive anonymous HIBE from anonymous Spatial
Encryption (SE) \cite{BonehH08,ChenLLW11}. However, there is no known
anonymous SE scheme with constant-size ciphertexts. Thus the construction of
efficient and fully secure anonymous HIBE with short ciphertexts is an
unsolved problem.

\subsection{Our Contributions}

Motivated by the above challenge, we propose the first fully secure and
anonymous HIBE scheme with short ciphertexts in prime order (asymmetric)
bilinear groups. The comparison between previous HIBE schemes and ours is
given in Table~\ref{tab:comp-hibe}. To construct a fully secure and anonymous
HIBE scheme, we use the IBE scheme in prime order (asymmetric) bilinear
groups of Lewko and Waters \cite{LewkoW10}. Note that their IBE scheme does
not even converted to a (non-anonymous) HIBE scheme with short ciphertexts
since it does not support private key re-randomization\footnote{To support
private key re-randomization using a public key, some elements $\hat{g},
\hat{u}, \hat{h} \in \hat{G}$ in a private key should be moved to a public
key. However, these elements cannot be moved to the public key since the
proof of dual system encryption goes wrong.}.

\begin{table*}[t]
\caption{Comparison between previous HIBE schemes and ours}
\label{tab:comp-hibe} \vs \small \centerline{\tabcolsep=9.1pt
\renewcommand{\arraystretch}{1.3}
\begin{tabular}[c]{llllllll}
\hline
    Scheme  & ANON & R.L. & Prime & PP Size & SK Size & CT Size & Assumption \\
\hline
    GS-HIBE \cite{GentryS02} & No & $\Omega(q^l)$ & Yes & $O(\lambda)$ &
            $O(l \lambda)$ & $O(l \lambda)$ & BDH (ROM) \\
    BB-HIBE \cite{BonehB04e} & No & $\Omega(2^{\lambda l})$ & Yes & $O(l \lambda)$ &
            $O(l \lambda)$ & $O(l \lambda)$ & DBDH \\
    BBG-HIBE \cite{BonehBG05} & No & $\Omega(2^{\lambda l})$ & Yes & $O(l \lambda)$ &
            $O(l \lambda)$ & $2 k + k_T$ & $q$-Type \\
    CS-HIBE \cite{ChatterjeeS06} & No & $\Omega(q^l)$ & Yes & $O(l \lambda)$ &
            $O(l \lambda)$ & $O(l \lambda)$ & DBDH \\
\cdashline{3-8}
    Waters-HIBE \cite{Waters09} & No & $\Omega(q^2)$ & Yes & $O(l \lambda)$ &
            $O(l \lambda)$ & $O(l \lambda)$ & DBDH, DLIN \\
    LW-HIBE \cite{LewkoW10} & No  & $\Omega(q)$ & No & $O(l \lambda)$ &
            $O(l \lambda)$ & $2 k + k_T$ & Static \\
    LW-HIBE \cite{LewkoW11} & No  & $\Omega(q)$ & No & $O(\lambda)$ &
            $O(l \lambda)$ & $O(l \lambda)$ & Static \\
    OT-HIPE \cite{OkamotoT11} & No  & $\Omega(q)$ & Yes & $O(l^4 \lambda)$ &
            $O(l^2 \lambda)$ & $133 k + k_T$ & DLIN \\
    Lewko-HIBE \cite{Lewko12} & No  & $\Omega(q)$ & Yes & $O(\lambda)$ &
            $O(l \lambda)$ & $O(l \lambda)$ & DLIN \\
\cdashline{2-8}
    BW-HIBE \cite{BoyenW06} & Yes & $\Omega(2^{\lambda l})$ & Yes & $O(l^2 \lambda)$ &
            $O(l^2 \lambda)$ & $O(l \lambda)$ & DBDH, DLIN \\
    SKOS-HIBE \cite{SeoKOS09} & Yes & $\Omega(2^{\lambda l})$ & No & $O(l \lambda)$ &
            $O(l \lambda)$ & $3 k + k_T$ & $q$-Type \\
    Ducas-HIBE \cite{Ducas10} & Yes & $\Omega(2^{\lambda l})$ & Yes & $O(l \lambda)$ &
            $O(l \lambda)$ & $3 k + k_T$ & $q$-Type \\
    LL-HIBE \cite{LeeL10} & Yes & $\Omega(2^{\lambda l})$ & Yes & $O(l \lambda)$ &
            $O(l \lambda)$ & $6 k + k_T$ & $q$-Type \\
\cdashline{3-8}
    DIP-HIBE \cite{CaroIP10} & Yes & $\Omega(q)$ & No & $O(l \lambda)$ &
            $O(l \lambda)$ & $2 k + k_T$ & Static \\
    LOSTW-HIPE \cite{LewkoOSTW10} & Yes & $\Omega(lq)$ & Yes & $O(l^4 \lambda)$ &
            $O(l^3 \lambda)$ & $O(l^2 \lambda)$ & $q$-Type \\
    OT-HIPE \cite{OkamotoT10} & Yes & $\Omega(l^2 q)$ & Yes & $O(l^3 \lambda)$ &
            $O(l^4 \lambda)$ & $O(l^2 \lambda)$ & DLIN \\
    OT-HIPE \cite{OkamotoT12} & Yes & $\Omega(lq)$ & Yes & $O(l^2 \lambda)$ &
            $O(l^2 \lambda)$ & $O(l \lambda)$ & DLIN \\
    Ours    & Yes & $\Omega(q)$ & Yes & $O(l \lambda)$ & $O(l \lambda)$ &
            $6 k + k_T$ & Static \\
\hline
\multicolumn{8}{l}{ANON = anonymity, R.L. = reduction loss, Prime = prime order
    bilinear groups} \\
\multicolumn{8}{l}{$\lambda$ = security parameter, $l$ = hierarchical depth,
    $q$ = polynomial value, $k, k_T$ = the bit size of group $\G$ and
    $\G_T$}
\end{tabular}
}
\end{table*}

To construct an anonymous HIBE scheme, we should devise techniques for
private key re-randomization and ciphertext anonymization. The private key
re-randomization process is required in the delegation algorithm of HIBE and
anonymous HIBE. In HIBE, private keys are simply re-randomized using the
public elements of public parameters. However, private keys of anonymous HIBE
cannot be simply re-randomized using the public elements because an attacker
can break anonymity using the public elements. To solve this problem, we may
use the \textit{private re-randomization} technique of Boyen and Waters
\cite{BoyenW06} that re-randomizes private keys using the private elements of
private keys. Nevertheless, if the private re-randomization technique is used
in the dual system encryption method, then additional random values in
semi-functional private keys are not completely randomized in the proof that
distinguishes a normal private key from a semi-functional private key.

To resolve this difficulty, we define two types of semi-functional private
keys as semi-functional type-1 and semi-functional type-2, and we show that
it is hard to distinguish these two types of semi-functional private keys.
The main idea to provide ciphertext anonymity is that the Decisional
Diffie-Hellman (DDH) assumption still holds in asymmetric bilinear groups of
prime order. We prove the anonymity property of our scheme by introducing a
new assumption since the simple DDH assumption is not enough for the security
proof. Furthermore, we implemented our anonymous HIBE scheme using the PBC
library to support our claim of efficiency and we measured the performance of
our scheme.

\subsection{Related Work}

IBE was introduced to solve the certificate management problem in public key
encryption systems, but it additionally requires a Private-Key Generator
(PKG) \cite{BonehF01,BonehF03}. HIBE was invented to reduce the burden of the
IBE's PKG by re-arranging an identity as a hierarchical tree structure and by
allowing the delegation of private key generation from upper level users to
lower level users \cite{HorwitzL02}. Gentry and Silverberg proposed the first
HIBE scheme in the random oracle model \cite{GentryS02}. Canetti et al.
constructed the first HIBE scheme without random oracles and introduced a
selective model to prove the security of their scheme \cite{CanettiHK03}. The
selective model was widely used in the security proof of IBE and HIBE even
though it is weaker than the full model. For instance, Boneh and Boyen
proposed an efficient HIBE scheme with linear-size ciphertexts
\cite{BonehB04e,BonehB11}, and Boneh et al. proposed an HIBE scheme with
constant-size ciphertexts \cite{BonehBG05}.

To construct a fully secure HIBE scheme, Boneh and Boyen showed that a
selectively secure HIBE scheme is naturally converted to a fully secure HIBE
scheme with exponential loss of a reduction efficiency \cite{BonehB04e}.
However, this approach has a serious problem -- that is, the efficiency of
the reduction is $1/\Omega(2^{\lambda l})$ where $\lambda$ is a security
parameter and $l$ is the maximum hierarchical depth. To remedy this
situation, Waters proposed an HIBE scheme by extending his fully secure IBE
scheme with an efficient reduction to a HIBE scheme \cite{Waters05}, and
Chatterjee and Sarkar improved the efficiency of Waters' scheme
\cite{ChatterjeeS06}. However, these schemes also have the problem of an
inefficient reduction $1/\Omega(q^l)$ in the hierarchical setting where $q$
is a polynomial value. Gentry and Halevi proposed another fully secure HIBE
scheme with an efficient reduction by using complex assumptions
\cite{GentryH09}. Recently, Waters introduced the dual system encryption
method that can be used to construct a fully secure HIBE scheme with an
efficient reduction under simple assumptions \cite{Waters09,LewkoW10}.

Anonymous IBE is related to public key encryption with keyword search (PEKS)
\cite{BonehCOP04,Gentry06}, and the concept of anonymous HIBE was introduced
by Abdalla et al. \cite{AbdallaBC+05} by extending the concept of anonymous
IBE. Boyen and Waters proposed the first anonymous HIBE scheme without random
oracles and proved its security in the selective model \cite{BoyenW06}. For
the construction of anonymous HIBE, they devised a linear splitting technique
for ciphertext anonymity and a private re-randomization technique for private
key randomization. Seo et al. proposed the first anonymous HIBE scheme with
short ciphertexts in composite order bilinear groups \cite{SeoKOS09}. Ducas
constructed anonymous HIBE schemes using asymmetric bilinear groups of prime
order \cite{Ducas10}. Lee and Lee proposed an efficient anonymous HIBE scheme
with short ciphertexts that is secure in all types of bilinear groups of
prime order \cite{LeeL10}. De Caro et al. proposed the first fully secure and
anonymous HIBE scheme with short ciphertexts using the dual system encryption
method in composite order bilinear groups \cite{CaroIP10}.

HIBE schemes also can be constructed from Attribute Based Encryption (ABE)
schemes \cite{GoyalPSW06} and Predicate Encryption (PE) schemes with
delegation capabilities \cite{ShiW08,OkamotoT09}. PE schemes with linear-size
ciphertexts that have the delegation capability include the dHVE scheme of
Shi and Waters in composite order bilinear groups \cite{ShiW08} and HIPE
schemes of Okamoto and Takashima based on dual pairing vector spaces
\cite{OkamotoT09,LewkoOSTW10,OkamotoT10,OkamotoT12}. A non-anonymous HIPE
scheme based on dual pairing vector spaces can have constant-size
ciphertexts, but the ciphertext should contain a linear-size identity vector
\cite{OkamotoT11}. Though bilinear groups were widely used in the
construction of HIBE, some HIBE schemes were designed in lattices
\cite{CashHKP10,AgrawalBB10e,AgrawalBB10l}.

\section{Preliminaries}

We define anonymous HIBE and give the formal definition of its full model
security. Let $\mathcal{I}$ be an identity space and $\mathcal{M}$ be a
message space. A hierarchical identity $ID$ of depth $c$ is defined as an
identity vector $(I_1, \ldots, I_c) \in \mathcal{I}^c$. A hierarchical
identity $ID = (I_1, \ldots, I_c)$ of depth $c$ is a prefix of a hierarchical
identity $ID' = (I'_1, \ldots, I'_d)$ of depth $d$ if $c \leq d$ and for all
$i \in \{1, \ldots, c\}$, $I_i = I'_i$.

\subsection{Anonymous HIBE}

An anonymous HIBE scheme consists of five algorithms (\textbf{Setup, KeyGen,
Delegate, Encrypt, Decrypt}). Formally it is defined as:
\begin{description}
\item \textbf{Setup}($1^{\lambda}, l$). The setup algorithm takes as
    input a security parameter $1^{\lambda}$ and a maximum hierarchical
    depth $l$. It outputs a master key $MK$ and public parameters $PP$.

\item \textbf{KeyGen}($ID, MK, PP$). The key generation algorithm takes
    as input a hierarchical identity $ID$ of depth $m$ where $m \leq l$,
    the master key $MK$, and the public parameters $PP$. It outputs a
    private key $SK_{ID}$ for $ID$.

\item \textbf{Delegate}($ID', SK_{ID}, PP$). The delegation algorithm
    takes as input a hierarchical identity $ID'$ of depth $m+1$ where
    $m+1 \leq l$, a private key $SK_{ID}$ for a hierarchical identity
    $ID$ of depth $m$, and the public parameters $PP$. If $ID$ is a
    prefix of $ID'$, then it outputs a delegated private key $SK_{ID'}$
    for $ID'$.

\item \textbf{Encrypt}($ID, M, PP$). The encryption algorithm takes as
    input a hierarchical identity $ID$ of depth $n$ where $n \leq l$, a
    message $M \in \mathcal{M}$, and the public parameters $PP$. It
    outputs a ciphertext $CT$ for $ID$ and $M$.

\item \textbf{Decrypt}($CT, SK_{ID}, PP$). The decryption algorithm takes
    as input a ciphertext $CT$ for a hierarchical identity $ID'$, a
    private key $SK_{ID}$ for a hierarchical identity $ID$, and the
    public parameters $PP$. If $ID = ID'$, then it outputs an encrypted
    message $M$.
\end{description}

The correctness property of anonymous HIBE is defined as follows: For all
$MK, PP$ generated by $\textbf{Setup}$, all $ID, ID' \in \mathcal{I}^n$, any
$SK_{ID}$ generated by $\textbf{KeyGen}$, and any $M$, it is required that
\begin{itemize}
\item If $ID = ID'$, then $\textbf{Decrypt}(\textbf{Encrypt}(ID', M, PP),
    SK_{ID}, PP) = M$.
\item If $ID \neq ID'$, then $\textbf{Decrypt}(\textbf{Encrypt}(ID', M,
    PP), SK_{ID}, PP) = \perp$ with all but negligible probability.
\end{itemize}
The second condition of the correctness property is not a trivial one to
satisfy since the decryption algorithm of anonymous HIBE cannot easily check
whether $ID = ID'$ or not because of anonymity. One possible relaxation is to
use a computational condition instead of a statistical condition. For a
computational condition, we can use weak robustness of Abdalla et al.
\cite{AbdallaBN10}.

The security property of anonymous HIBE under a chosen plaintext attack is
defined in terms of the following experiment between a challenger $\mcC$ and
a PPT adversary $\mcA$:
\begin{enumerate}
\item \textbf{Setup}: $\mcC$ runs $\textbf{Setup}(1^{\lambda}, l)$ to
    generate a master key $MK$ and public parameters $PP$. It keeps $MK$
    to itself and gives $PP$ to $\mcA$.

\item \textbf{Query 1}: $\mcA$ may adaptively request a polynomial number
    of private keys for hierarchical identities $ID_1, \ldots, ID_{q_1}$
    of arbitrary depths. In response, $\mcC$ gives the corresponding
    private keys $SK_{ID_1}, \ldots, SK_{ID_{q_1}}$ to $\mcA$ by running
    $\textbf{KeyGen}(ID_i, MK, PP)$.

\item \textbf{Challenge}: $\mcA$ submits two hierarchical identities
    $ID_0^*, ID_1^* \in \mathcal{I}^n$ and two messages $M_0^*, M_1^*$
    with equal length subject to the restriction: for all $ID_i$ of
    private key queries, $ID_i$ is not a prefix of $ID_0^*$ and $ID_1^*$.
    $\mcC$ flips a random coin $\gamma \in \{0,1\}$ and gives the
    challenge ciphertext $CT^*$ to $\mcA$ by running
    $\textbf{Encrypt}(ID_{\gamma}^*, M_{\gamma}^*, PP)$.

\item \textbf{Query 2}: $\mcA$ may continue to request a polynomial
    number of private keys for hierarchical identities $ID_{q_1 +1},
    \ldots, ID_q$ subject to the restriction as before.

\item \textbf{Guess}: $\mcA$ outputs a guess $\gamma' \in \{0,1\}$ of
    $\gamma$, and wins the game if $\gamma' = \gamma$.
\end{enumerate}
The advantage of $\mcA$ is defined as $\Adv_{\mcA}^{AHIBE}(\lambda) = \big|
\Pr[\gamma = \gamma'] - 1/2 \big|$ where the probability is taken over all
the randomness of the experiment. An anonymous HIBE scheme is fully secure
under a chosen plaintext attack if for all PPT adversary $\mcA$, the
advantage of $\mcA$ in the above experiment is negligible in the security
parameter $\lambda$.

The security experiment of anonymous HIBE can be relaxed to complete one
introduced by Shi and Waters \cite{ShiW08} that traces the path of
delegation. Our definition of the security experiment that does not trace the
path of delegation is stronger than the complete one of Shi and Waters. Thus
if an anonymous HIBE scheme is secure in the security experiment of this
section, then the scheme is also secure in the complete one.

\subsection{Asymmetric Bilinear Groups}

Let $\G, \hat{\G}$ and $\G_{T}$ be multiplicative cyclic groups of prime
order $p$ with the security parameter $\lambda$. Let $g, \hat{g}$ be
generators of $\G, \hat{\G}$. The bilinear map $e:\G \times \hat{\G}
\rightarrow \G_{T}$ has the following properties:
\begin{enumerate}
\item Bilinearity: $\forall u \in \G, \forall \hat{v} \in \hat{\G}$ and
    $\forall a,b \in \Z_p$, $e(u^a,\hat{v}^b) = e(u,\hat{v})^{ab}$.
\item Non-degeneracy: $\exists g, \hat{g}$ such that $e(g,\hat{g})$ has
    order $p$, that is, $e(g,\hat{g})$ is a generator of $\G_T$.
\end{enumerate}
We say that $\G, \hat{\G}, \G_T$ are bilinear groups with no efficiently
computable isomorphisms if the group operations in $\G, \hat{\G},$ and $\G_T$
as well as the bilinear map $e$ are all efficiently computable, but there are
no efficiently computable isomorphisms between $\G$ and $\hat{\G}$.

\subsection{Complexity Assumptions}

We introduce five assumptions under asymmetric bilinear groups of prime
order. Assumptions 1 and 2 were introduced in Lewko and Waters
\cite{LewkoW10}, and Assumptions 3 and 4 are well-known. Assumption 5
(Asymmetric 3-Party Diffie-Hellman) is an asymmetric version of the Composite
3-Party Diffie-Hellman assumption introduced by Boneh and Waters
\cite{BonehW07} with a slight modification by augmenting one additional
element, and it is secure in the generic group model.

\vs \noindent \textbf{Assumption 1 (LW1)} Let $(p, \G, \hat{\G}, \G_T, e)$ be
a description of the asymmetric bilinear group of prime order $p$ with the
security parameter $\lambda$. Let $g, \hat{g}$ be generators of $\G,
\hat{\G}$ respectively. The assumption is that if the challenge values
    \begin{align*}
    D = ((p, \G, \hat{\G}, \G_T, e),
        g, g^a, g^b, g^{ab^2}, g^{b^2}, g^{b^3},
        g^c, g^{ac}, g^{bc}, g^{b^2 c}, g^{b^3 c}, \hat{g}, \hat{g}^b)
        \mbox{ and } T
    \end{align*}
are given, no PPT algorithm $\mcB$ can distinguish $T = T_0 = g^{ab^2 c}$
from $T = T_1 = g^d$ with more than a negligible advantage. The advantage of
$\mcB$ is defined as $\Adv_{\mcB}^{A1}(\lambda) = \big| \Pr[\mcB(D,T_0)=0] -
\Pr[\mcB(D,T_1)=0] \big|$ where the probability is taken over the random
choice of $a, b, c, d \in \Z_p$.

\vs \noindent \textbf{Assumption 2 (LW2)} Let $(p, \G, \hat{\G}, \G_T, e)$ be
a description of the asymmetric bilinear group of prime order $p$ with the
security parameter $\lambda$. Let $g, \hat{g}$ be generators of $\G,
\hat{\G}$ respectively. The assumption is that if the challenge values
    \begin{align*}
    D = ((p, \G, \hat{\G}, \G_T, e),
        g, g^a, g^{a^2}, g^{bx}, g^{abx}, g^{a^2x},
        \hat{g}, \hat{g}^a, \hat{g}^b, \hat{g}^c) \mbox{ and } T
    \end{align*}
are given, no PPT algorithm $\mcB$ can distinguish $T = T_0 = \hat{g}^{bc}$
from $T = T_1 = \hat{g}^d$ with more than a negligible advantage. The
advantage of $\mcB$ is defined as $\Adv_{\mcB}^{A2}(\lambda) = \big|
\Pr[\mcB(D,T_0)=0] - \Pr[\mcB(D,T_1)=0] \big|$ where the probability is taken
over the random choice of $a, b, c, x, d \in \Z_p$.

\vs \noindent \textbf{Assumption 3 (Symmetric eXternal Diffie-Hellman)} Let
$(p, \G, \hat{\G}, \G_T, e)$ be a description of the asymmetric bilinear
group of prime order $p$ with the security parameter $\lambda$. Let $g,
\hat{g}$ be generators of $\G, \hat{\G}$ respectively. The assumption is that
if the challenge values
    \begin{align*}
    D = ((p, \G, \hat{\G}, \G_T, e),
        g, \hat{g}, \hat{g}^a, \hat{g}^b) \mbox{ and } T
    \end{align*}
are given, no PPT algorithm $\mcB$ can distinguish $T = T_0 = \hat{g}^{ab}$
from $T = T_1 = \hat{g}^c$ with more than a negligible advantage. The
advantage of $\mcB$ is defined as $\Adv_{\mcB}^{A3}(\lambda) = \big|
\Pr[\mcB(D,T_0)=0] - \Pr[\mcB(D,T_1)=0] \big|$ where the probability is taken
over the random choice of $a, b, c \in \Z_p$.

\vs \noindent \textbf{Assumption 4 (Decisional Bilinear Diffie-Hellman)} Let
$(p, \G, \hat{\G}, \G_T, e)$ be a description of the asymmetric bilinear
group of prime order $p$ with the security parameter $\lambda$. Let $g,
\hat{g}$ be generators of $\G, \hat{\G}$ respectively. The assumption is that
if the challenge values
    \begin{align*}
    D = ((p, \G, \hat{\G}, \G_T, e),~
        g, g^a, g^b, g^c, \hat{g}, \hat{g}^a, \hat{g}^b, \hat{g}^c)
        \mbox{ and } T
    \end{align*}
are given, no PPT algorithm $\mcB$ can distinguish $T = T_0 = e(g,
\hat{g})^{abc}$ from $T = T_1 = e(g, \hat{g})^d$ with more than a negligible
advantage. The advantage of $\mcB$ is defined as $\Adv_{\mcB}^{A4}(\lambda) =
\big| \Pr[\mcB(D,T_0)=0] - \Pr[\mcB(D,T_1)=0] \big|$ where the probability is
taken over the random choice of $a, b, c, d \in \Z_p$.

\vs \noindent \textbf{Assumption 5 (Asymmetric 3-Party Diffie-Hellman)} Let
$(p, \G, \hat{\G}, \G_T, e)$ be a description of the asymmetric bilinear
group of prime order $p$ with the security parameter $\lambda$. Let $g,
\hat{g}$ be generators of $\G, \hat{\G}$ respectively. The assumption is that
if the challenge values
    \begin{align*}
    D = ( &(p, \G, \hat{\G}, \G_T, e),~
        g, g^a, g^b, g^c, g^{ab}, g^{a^2 b}, \hat{g}, \hat{g}^a, \hat{g}^b)
        \mbox{ and } T
    \end{align*}
are given, no PPT algorithm $\mcB$ can distinguish $T = T_0 = g^{abc}$ from
$T = T_1 = g^d$ with more than a negligible advantage. The advantage of
$\mcB$ is defined as $\Adv_{\mcB}^{A5}(\lambda) = \big| \Pr[\mcB(D,T_0)=0] -
\Pr[\mcB(D,T_1)=0] \big|$ where the probability is taken over the random
choice of $a, b, c, d \in \Z_p$.

\section{Anonymous HIBE}

We construct an anonymous HIBE scheme in prime order (asymmetric) bilinear
groups and prove its full model security under static assumptions.

\subsection{Construction}

Let $\mathcal{I} = \Z_p^*$. Our anonymous HIBE scheme is described as
follows:

\begin{description}
\item [\textbf{Setup}($1^{\lambda}, l$):] This algorithm first generates
    the asymmetric bilinear groups $\G, \hat{\G}, \G_T$ of prime order
    $p$ of bit size $\Theta(\lambda)$. It chooses random elements $g \in
    \G$ and $\hat{g} \in \hat{\G}$. It also chooses random exponents
    $\nu, \phi_1, \phi_2 \in \Z_p$ and sets $\tau = \phi_1 + \nu \phi_2$.
    Next, it selects random exponents $y_h, \{ y_{u_i} \}_{i=1}^l, y_w,
    \alpha \in \Z_p$ and sets $h = g^{y_h}, \hat{h} = \hat{g}^{y_h}, \{
    u_i = g^{y_{u_i}}, \hat{u}_i = \hat{g}^{y_{u_i}} \}_{i=1}^l, \hat{w}
    = \hat{g}^{y_w}$. It outputs a master key $MK = (\hat{g},
    \hat{g}^{\alpha}, \hat{h}, \{ \hat{u}_i \}_{i=1}^l )$ and public
    parameters as
    \begin{align*}
    PP = \Big(~
        g, g^{\nu}, g^{-\tau},~ h, h^{\nu}, h^{-\tau},~
        \{ u_i, u_i^{\nu}, u_i^{-\tau} \}_{i=1}^l,~
        \hat{w}^{\phi_1}, \hat{w}^{\phi_2}, \hat{w},~
        \Omega = e(g, \hat{g})^{\alpha}
    ~\Big).
    \end{align*}

\item [\textbf{KeyGen}($ID, MK, PP$):] This algorithm takes as input a
    hierarchical identity $ID = (I_1, \ldots, I_m) \in \mathcal{I}^m$ and
    the master key $MK$. It first selects random exponents $r_1, c_1,
    c_2, \{ c_{3,i} \}_{i=m+1}^l \in \Z_p$ and creates the decryption and
    delegation components of a private key as
    \begin{align*}
    &   K_{1,1} = \hat{g}^{\alpha} (\hat{h} \prod_{i=1}^m \hat{u}_i^{I_i})^{r_1}
                  (\hat{w}^{\phi_1})^{c_1},~
        K_{1,2} = (\hat{w}^{\phi_2})^{c_1},~
        K_{1,3} = \hat{w}^{c_1}, \\
    &   K_{2,1} = \hat{g}^{r_1} (\hat{w}^{\phi_1})^{c_2},~
        K_{2,2} = (\hat{w}^{\phi_2})^{c_2},~
        K_{2,3} = \hat{w}^{c_2}, \\
    &   \big\{
        L_{3,i,1} = \hat{u}_i^{r_1} (\hat{w}^{\phi_1})^{c_{3,i}},~
        L_{3,i,2} = (\hat{w}^{\phi_2})^{c_{3,i}},~
        L_{3,i,3} = \hat{w}^{c_{3,i}}
        \big\}_{i=m+1}^l.
    \end{align*}
    Next, it selects random exponents $r_2, c_4, c_5, \{ c_{6,i}
    \}_{i=m+1}^l \in \Z_p$ and creates the randomization components of
    the private key as
    \begin{align*}
    &   R_{1,1} = (\hat{h} \prod_{i=1}^m \hat{u}_i^{I_i})^{r_2}
                  (\hat{w}^{\phi_1})^{c_4},~
        R_{1,2} = (\hat{w}^{\phi_2})^{c_4},~
        R_{1,3} = \hat{w}^{c_4},~ \\
    &   R_{2,1} = \hat{g}^{r_2} (\hat{w}^{\phi_1})^{c_5},~
        R_{2,2} = (\hat{w}^{\phi_2})^{c_5},~
        R_{2,3} = \hat{w}^{c_5},~ \\
    &   \big\{
        R_{3,i,1} = \hat{u}_i^{r_2} (\hat{w}^{\phi_1})^{c_{6,i}},~
        R_{3,i,2} = (\hat{w}^{\phi_2})^{c_{6,i}},~
        R_{3,i,3} = \hat{w}^{c_{6,i}}
        \big\}_{i=m+1}^l.
    \end{align*}
    Finally, it outputs a private key as
    \begin{align*}
    SK_{ID} = \Big(~
    &   K_{1,1}, K_{1,2}, K_{1,3},~
        K_{2,1}, K_{2,2}, K_{2,3},~
        \{ L_{3,i,1}, L_{3,i,2}, L_{3,i,3} \}_{i=m+1}^l,~ \\
    &   R_{1,1}, R_{1,2}, R_{1,3},~
        R_{2,1}, R_{2,2}, R_{2,3},~
        \{ R_{3,i,1}, R_{3,i,2}, R_{3,i,3} \}_{i=m+1}^l
    ~\Big).
    \end{align*}

\item [\textbf{Delegate}($ID', SK_{ID}, PP$):] This algorithm takes as
    input a hierarchical identity $ID' = (I_1, \ldots, I_{m+1}) \in
    \mathcal{I}^{m+1}$ and a private key $SK_{ID}$ for a hierarchical
    identity $ID = (I_1, \ldots, I_m) \in \mathcal{I}^m$ where $ID$ is a
    prefix of $ID'$. Let $(W_1, W_2, W_3) = (\hat{w}^{\phi_1},
    \hat{w}^{\phi_2}, \hat{w})$. It first selects random exponents
    $\gamma_1, \delta_1, \delta_2, \{ \delta_{3,i} \}_{i=m+2}^l \in \Z_p$
    and creates the decryption and delegation components of a delegated
    private key as
    \begin{align*}
    &   \big( K'_{1,k} = K_{1,k} L_{3,m+1,k}^{I_{m+1}} \cdot
            (R_{1,k} R_{3,m+1,k}^{I_{m+1}})^{\gamma_1} W_k^{\delta_1}
        \big)_{1 \leq k\leq 3},~
        \big( K'_{2,k} = K_{2,k} \cdot R_{2,k}^{\gamma_1} W_k^{\delta_2}
        \big)_{1 \leq k\leq 3},~ \\
    &   \big\{ \big(
            L'_{3,i,k} = L_{3,i,k} \cdot R_{3,i,k}^{\gamma_1} W_k^{\delta_{3,i}}
        \big)_{1 \leq k \leq 3} \big\}_{i=m+2}^l.
    \end{align*}
    Next, it selects random exponents $\gamma_2, \delta_4, \delta_5, \{
    \delta_{6,i} \}_{i=m+2}^l \in \Z_p$ and creates the randomization
    components of the delegated private key as
    \begin{align*}
    &   \big( R'_{1,k} = (R_{1,k} R_{3,m+1,k}^{I_{m+1}})^{\gamma_2}
            W_k^{\delta_4} \big)_{1 \leq k \leq 3},~
        \big( R'_{2,k} = R_{2,k}^{\gamma_2} W_k^{\delta_5}
            \big)_{1 \leq k \leq 3},~
        \big\{ \big(
            R'_{3,i,k} = R_{3,i,k}^{\gamma_2} W_k^{\delta_{6,i}}
        \big)_{1 \leq k \leq 3} \big\}_{i=m+2}^l.
    \end{align*}
    Finally, it outputs a delegated private key as
    \begin{align*}
    SK_{ID'} = \Big(~
    &   K'_{1,1}, K'_{1,2}, K'_{1,3},~
        K'_{2,1}, K'_{2,2}, K'_{2,3},~
        \{ L'_{3,i,1}, L'_{3,i,2}, L'_{3,i,3} \}_{i=m+2}^l,~ \\
    &   R'_{1,1}, R'_{1,2}, R'_{1,3},~
        R'_{2,1}, R'_{2,2}, R'_{2,3},~
        \{ R'_{3,i,1}, R'_{3,i,2}, R'_{3,i,3} \}_{i=m+2}^l
    ~\Big).
    \end{align*}
    The distribution of the delegated private key is the same as the
    original private key since the random values are defined as $r'_1 =
    r_1 + r_2 \gamma_1, r'_2 = r_2 \gamma_2$ where $r_1, r_2$ are random
    exponents in the private key $SK_{ID}$. Note that $c_1, c_2, \{
    c_{3,i} \}, c_4, c_5, \{ c_{6,i} \}$ are perfectly re-randomized
    since $\hat{w}^{\phi_1}, \hat{w}^{\phi_2}, \hat{w}$ are publicly
    known and $\delta_1, \delta_2, \{ \delta_{3,i} \}, \delta_4,
    \delta_5, \{ \delta_{6,i} \}$ are chosen randomly.

\item [\textbf{Encrypt}($ID, M, PP$):] This algorithm takes as input a
    hierarchical identity $ID = (I_1, \ldots, I_n) \in \mathcal{I}^n$, a
    message $M \in \G_T$, and the public parameter $PP$. It selects a
    random exponent $t \in \Z_p$ and outputs a ciphertext as
    \begin{align*}
    CT = \Big(~
    &   C       = \Omega^t M,~
        C_{1,1} = g^t,~
        C_{1,2} = (g^{\nu})^t,~
        C_{1,3} = (g^{-\tau})^t,~ \\
    &   C_{2,1} = (h \prod_{i=1}^n u_i^{I_i})^t,~
        C_{2,2} = (h^{\nu} \prod_{i=1}^n (u_i^{\nu})^{I_i})^t,~
        C_{2,3} = (h^{-\tau} \prod_{i=1}^n (u_i^{-\tau})^{I_i})^t
    ~\Big).
    \end{align*}

\item [\textbf{Decrypt}($CT, SK_{ID}, PP$):] This algorithm takes as
    input a ciphertext $CT$ and a private key $SK_{ID}$ for a
    hierarchical identity $ID = (I_1, \ldots, I_n)$. It outputs the
    encrypted message as
    \begin{align*}
    M \leftarrow C \cdot
        \prod_{i=1}^3 e(C_{1,i}, K_{1,i})^{-1} \cdot
        \prod_{i=1}^3 e(C_{2,i}, K_{2,i}).
    \end{align*}
\end{description}

\subsection{Correctness}

The first condition of the correctness property can be easily checked by the
following equation as
    \begin{eqnarray*}
    \prod_{i=1}^3 e(C_{1,i}, K_{1,i})^{-1} \cdot \prod_{i=1}^3 e(C_{2,i}, K_{2,i})
     =  e(g^t, \hat{g}^{\alpha} (\hat{h} \prod_{i=1}^n \hat{u}_i^{I_i})^{r_1})^{-1} \cdot
        e((h \prod_{i=1}^n u_i^{I_i})^t, \hat{g}^{r_1})
     =  e(g, \hat{g})^{-\alpha t}
    \end{eqnarray*}
since the inner product of $(1, \nu, -\tau)$ and $(\phi_1, \phi_2, 1)$ are
zero.
The second condition of the correctness property can be satisfied by using
the technique of Boneh and Waters \cite{BonehW07} that uses the limited
message space. If we use a computational condition instead of a statistical
condition, then we can achieve weak robustness by using the transformation of
Abdalla et al. \cite{AbdallaBN10}.

\subsection{Security Analysis}

\begin{theorem} \label{thm:ahibe-prime}
The above anonymous HIBE scheme is fully secure under a chosen plaintext
attack if Assumptions 1, 2, 3, 4 and 5 hold. That is, for any PPT adversary
$\mcA$, there exist PPT algorithms $\mcB_1, \mcB_2, \mcB_3, \mcB_4$, and
$\mcB_5$ such that
    \begin{eqnarray*}
    \Adv_{\mcA}^{AHIBE}(\lambda)
    \leq \Adv_{\mcB_1}^{A1}(\lambda) +
        q \big(
            \Adv_{\mcB_2}^{A2}(\lambda) +
            \Adv_{\mcB_3}^{A3}(\lambda) \big)
        + \Adv_{\mcB_4}^{A4}(\lambda) +
        \Adv_{\mcB_5}^{A5}(\lambda).
    \end{eqnarray*}
where $q$ is the maximum number of private key queries of $\mcA$.
\end{theorem}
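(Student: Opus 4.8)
The plan is to follow Waters' dual system encryption methodology, adapted to the asymmetric prime-order setting of Lewko and Waters and augmented to handle anonymity. First I would define a \emph{semi-functional} ciphertext and two flavors of \emph{semi-functional} private key (type-1 and type-2). The semi-functional components live in the subspace that is ``hidden'' by the orthogonality relation $\langle (1,\nu,-\tau),(\phi_1,\phi_2,1)\rangle = \phi_1+\nu\phi_2-\tau = 0$ already used in the construction: a semi-functional ciphertext augments the $C_{1,\cdot}$ and $C_{2,\cdot}$ triples with a component in the direction that pairs trivially against a normal key but non-trivially against a semi-functional key, while a semi-functional key adds a masking term in the $\hat{w}$-direction. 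The two key types differ in how the randomization components $R_{1,\cdot}, R_{2,\cdot}, \{R_{3,i,\cdot}\}$ (produced by \emph{private} re-randomization) carry this masking term; this distinction is precisely what is needed to push the proof through, as explained below.

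Second, I would lay out the hybrid sequence whose telescoping gap is the claimed bound. Let $\mathbf{G}_0$ be the real game. In $\mathbf{G}_1$ the challenge ciphertext is made semi-functional, with the gap bounded by Assumption 1 (LW1), contributing $\Adv_{\mcB_1}^{A1}(\lambda)$. Then, for $k=1,\dots,q$, I would process the $k$-th key in two sub-steps: convert it from normal to semi-functional type-1 (gap bounded by Assumption 2, one copy of $\Adv_{\mcB_2}^{A2}(\lambda)$), then from type-1 to type-2 (gap bounded by Assumption 3 / SXDH, one copy of $\Adv_{\mcB_3}^{A3}(\lambda)$); summing over the $q$ keys yields the factor $q\big(\Adv_{\mcB_2}^{A2}(\lambda)+\Adv_{\mcB_3}^{A3}(\lambda)\big)$. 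After this loop all keys are type-2 semi-functional and the ciphertext is semi-functional. In the penultimate game I would replace the message term $\Omega^t M$ by a uniformly random element of $\G_T$ (message hiding), bounded by Assumption 4 (DBDH), contributing $\Adv_{\mcB_4}^{A4}(\lambda)$. In the final game I would blind the identity-carrying components $C_{2,1},C_{2,2},C_{2,3}$ so the ciphertext no longer depends on the target identity (anonymity), bounded by Assumption 5, contributing $\Adv_{\mcB_5}^{A5}(\lambda)$. In this last game the challenge ciphertext is independent of both $M_\gamma^*$ and $ID_\gamma^*$, so the advantage is exactly $0$, which gives the theorem.

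For each reduction I would embed the challenge term of the corresponding assumption so that $T_0$ produces one game and $T_1$ the adjacent one. A point to verify in the ciphertext-side reductions (A1, A4, A5) is that the simulator can answer \emph{all} key queries, and in the key-side reductions (A2, A3) that the single ``challenge'' key is correctly placed while earlier keys are type-2 and later keys are normal. I would also check the standard ``no self-decryption'' safeguard: when making the $k$-th key semi-functional, the simulator must not be able to test semi-functionality by decrypting the challenge ciphertext with that key. This is guaranteed by the prefix restriction in the challenge phase, which forces a nominal-semi-functionality cancellation whenever the queried identity would otherwise match the target, so that the correlated exponents are information-theoretically invisible to $\mcA$.

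The step I expect to be the main obstacle is the normal-to-semi-functional transition under Assumption 2. Because anonymity forces re-randomization using the private elements $\hat{w}^{\phi_1},\hat{w}^{\phi_2},\hat{w}$ rather than public ones, the randomization components of a freshly converted key do not receive a fully independent semi-functional contribution: the masking randomness the reduction can inject is shared between the decryption components and the randomization components, so a direct one-shot argument leaves the semi-functional term under-randomized and hence distinguishable. This is exactly why I split the conversion into two types --- type-1 admits the shared (nominally correlated) masking that Assumption 2 can produce, and the type-1-to-type-2 step then uses SXDH (Assumption 3) to re-randomize the masking on the randomization components into genuinely independent exponents, yielding a properly distributed semi-functional key. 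Getting the bookkeeping of these correlated exponents right, and verifying that in the type-1 game the correlation is statistically hidden subject to the prefix restriction, is the technically delicate heart of the argument.
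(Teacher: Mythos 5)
Your proposal is correct and follows essentially the same route as the paper: the identical hybrid sequence (semi-functional ciphertext via Assumption 1, a per-key two-step conversion normal $\rightarrow$ type-1 $\rightarrow$ type-2 via Assumptions 2 and 3, message randomization via DBDH, and identity blinding via Assumption 5), the same use of nominal semi-functionality to evade the dual-system paradox, and the same motivation for the two key types, namely that private re-randomization leaves the randomization components' semi-functional exponents correlated until SXDH decouples them. The only cosmetic slip is describing the key-side masking as living in the $\hat{w}$-direction, whereas the paper places it on the $\hat{f}^{-\nu},\hat{f}$ components; this does not affect the argument.
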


\begin{proof}
To prove the security of our scheme, we use the dual system encryption
technique of \cite{Waters09,LewkoW10}. We first describe a semi-functional
key generation algorithm and a semi-functional encryption algorithm. They are
not used in a real system, but they are used in the security proof. For
semi-functionality, we set $f = g^{y_f}, \hat{f} = \hat{g}^{y_f}$ where $y_f$
is a random exponent in $\Z_p$.

\begin{description}
\item [\textbf{KeyGenSF-1}.] The semi-functional type-1 key generation
    algorithm first creates a normal private key using the master key.
    Let $(K'_{1,1}, \ldots, \{ R'_{3,i,1}, \ldots, R'_{3,i,3}
    \}_{i=m+1}^l)$ be the normal private key of a hierarchical identity
    $ID = (I_1, \ldots, I_m)$ with random exponents $r_1, r_2, c_1, c_2,
    \{ c_{3,i} \}, c_4, c_5, \{ c_{6,i} \} \in \Z_p$. It selects random
    exponents $s_{k,1}, z_{k,1}, \{ z_{k,2,i} \}_{i=m+1}^l, s_{k,2} \in
    \Z_p$ and outputs a semi-functional type-1 private key as
    \begin{align*}
    &   K_{1,1} = K'_{1,1} (\hat{f}^{-\nu})^{s_{k,1} z_{k,1}},~
        K_{1,2} = K'_{1,2} \hat{f}^{s_{k,1} z_{k,1}},~
        K_{1,3} = K'_{1,3}, \\
    &   K_{2,1} = K'_{2,1} (\hat{f}^{-\nu})^{s_{k,1}},~
        K_{2,2} = K'_{2,2} \hat{f}^{s_{k,1}},~
        K_{2,3} = K'_{2,3}, \\
    &   \big\{
        L_{3,i,1} = L'_{3,i,1} (\hat{f}^{-\nu})^{s_{k,1} z_{k,2,i}},~
        L_{3,i,2} = L'_{3,i,2} \hat{f}^{s_{k,1} z_{k,2,i}},~
        L_{3,i,3} = L'_{3,i,3} \big\}_{i=m+1}^l,~ \\
    &   R_{1,1} = R'_{1,1} (\hat{f}^{-\nu})^{s_{k,2} z_{k,1}},~
        R_{1,2} = R'_{1,2} \hat{f}^{s_{k,2} z_{k,1}},~
        R_{1,3} = R'_{1,3}, \\
    &   R_{2,1} = R'_{2,1} (\hat{f}^{-\nu})^{s_{k,2}},~
        R_{2,2} = R'_{2,2} \hat{f}^{s_{k,2}},~
        R_{2,3} = R'_{2,3}, \\
    &   \big\{
        R_{3,i,1} = R'_{3,i,1} (\hat{f}^{-\nu})^{s_{k,2} z_{k,2,i}},~
        R_{3,i,2} = R'_{3,i,2} \hat{f}^{s_{k,2} z_{k,2,i}},~
        R_{3,i,3} = R'_{3,i,3} \big\}_{i=m+1}^l.
    \end{align*}
    Note that the randomization components should contain the
    semi-functional part since this semi-functional part enables the
    correct simulation of the security proof for anonymity.

\item [\textbf{KeyGenSF-2}.] The semi-functional type-2 key generation
    algorithm first creates a normal private key using the master key.
    Let $(K'_{1,1}, \ldots, \{ R'_{3,i,1}, \ldots, R'_{3,i,3}
    \}_{i=m+1}^l)$ be the normal private key of a hierarchical identity
    $ID = (I_1, \ldots, I_m)$. It selects random exponents $s_{k,1},
    z_{k,1}, \{ z_{k,2,i} \}_{i=m+1}^l, s_{k,2}, z_{k,3}, \{ z_{k,4,i}
    \}_{i=m+1}^l \in \Z_p$ and outputs a semi-functional type-2 private
    key the same as the semi-functional type-1 private key except that
    the randomization components are generated as
    \begin{align*}
    &   R_{1,1} = R'_{1,1} (\hat{f}^{-\nu})^{s_{k,2} z_{k,3}},~
        R_{1,2} = R'_{1,2} \hat{f}^{s_{k,2} z_{k,3}},~
        R_{1,3} = R'_{1,3}, \\
    &   R_{2,1} = R'_{2,1} (\hat{f}^{-\nu})^{s_{k,2}},~
        R_{2,2} = R'_{2,2} \hat{f}^{s_{k,2}},~
        R_{2,3} = R'_{2,3}, \\
    &   \big\{
        R_{3,i,1} = R'_{3,i,1} (\hat{f}^{-\nu})^{s_{k,2} z_{k,4,i}},~
        R_{3,i,2} = R'_{3,i,2} \hat{f}^{s_{k,2} z_{k,4,i}},~
        R_{3,i,3} = R'_{3,i,3}
        \big\}_{i=m+1}^l.
    \end{align*}
    Note that new random exponents $z_{k,3}, \{ z_{k,4,i} \}_{i=1}^l$ are
    chosen to generate the randomization components of the
    semi-functional type-2 private key, whereas the same exponents
    $z_{k,1}, \{ z_{k,2,i} \}_{i=1}^l$ of the decryption and delegation
    components are used to generate the randomization components in the
    semi-functional type-1 private key.

\item [\textbf{EncryptSF}.] The semi-functional encryption algorithm
    first creates a normal ciphertext using the public parameters. Let
    $(C', C'_{1,1}, \ldots, C'_{2,3})$ be the normal ciphertext. It
    selects random exponents $s_c, z_c \in \Z_p$ and outputs a
    semi-functional ciphertext as
    \begin{align*}
    &   C       = C',~
        C_{1,1} = C'_{1,1},~
        C_{1,2} = C'_{1,2} f^{s_c},~
        C_{1,3} = C'_{1,3} (f^{-\phi_2})^{s_c},~ \\
    &   C_{2,1} = C'_{2,1},~
        C_{2,2} = C'_{2,2} f^{s_c z_c},~
        C_{2,3} = C'_{2,3} (f^{-\phi_2})^{s_c z_c}.
    \end{align*}
\end{description}

If we decrypt a semi-functional ciphertext by using a semi-functional type-2
private key, then the decryption fails since an additional element
    $e(f,\hat{f})^{s_c ((s_{k,1} z_{k,1} + s_{k,2} z_{k,3} \gamma) -
    (s_{k,1} + s_{k,2} \gamma) z_c)}$
remains. Note that the decryption can be done after re-randomizing the
private key using a random exponent $\gamma$. If $(s_{k,1} z_{k,1} + s_{k,2}
z_{k,3} \gamma) = (s_{k,1} + s_{k,2} \gamma) z_c$, then the decryption
algorithm succeeds. However, the probability of this is negligible since
$s_{k,1}, s_{k,2}, z_{k,1}, z_{k,3}, z_c, \gamma$ are randomly chosen. In
case of the semi-functional type-1 private key, the additional random element
can be restated as
    $e(f, \hat{f})^{(s_{k,1} + s_{k,2} \gamma) s_c (z_{k,1} - z_c)}$.
If $z_{k,1} = z_c$, then the decryption algorithm succeeds. In this case, we
say that the private key is \textit{nominally} semi-functional type-1.

\vs The security proof consists of a sequence of games. The first game will
be the original security game and the last one will be a game such that the
adversary has no advantage. We define the games as follows:

\begin{description}
\item [\textbf{Game} $\textbf{G}_0$.] This game is the original security
    game. That is, the private keys and the challenge ciphertext are
    normal.

\item [\textbf{Game} $\textbf{G}_1$.] We first modify $\textbf{G}_0$ into
    a new game $\textbf{G}_1$. This game is almost identical to
    $\textbf{G}_0$ except that the challenge ciphertext is
    semi-functional.

\item [\textbf{Game} $\textbf{G}_2$.] Next, we modify $\textbf{G}_1$ into
    a game $\textbf{G}_2$. In this game, the private keys are
    semi-functional type-2 and the challenge ciphertext is
    semi-functional. Suppose that an adversary makes at most $q$ private
    key queries. For the security proof, we define a sequence of games
    $\textbf{G}_{1,0}, \ldots, \textbf{G}'_{1,k}, \textbf{G}_{1,k},
    \ldots, \textbf{G}_{1,q}$ where $\textbf{G}_{1,0} = \textbf{G}_1$. In
    $\textbf{G}'_{1,k}$ and $\textbf{G}_{1,k}$, a normal private key is
    given to the adversary for all $j$-th private key queries such that
    $j > k$ and a semi-functional type-2 private key is given to the
    adversary for all $j$-th private key queries such that $j < k$.
    However, for $k$-th private key query, a semi-functional type-1
    private key is given to the adversary in $\textbf{G}'_{1,k}$ where as
    a semi-functional type-2 private key is given in $\textbf{G}_{1,k}$.
    It is obvious that $\textbf{G}_{1,q}$ is equal to $\textbf{G}_2$.

\item [\textbf{Game} $\textbf{G}_3$.] We now define a new game. This game
    differs from $\textbf{G}_2$ where the challenge ciphertext component
    $C$ is replaced by a random element in $\G_T$.

\item [\textbf{Game} $\textbf{G}_4$.] Finally, we change $\textbf{G}_3$
    to a new game $\textbf{G}_4$. In this game, the semi-functional
    ciphertext components $(C_{2,1}, C_{2,2}, C_{2,3})$ are formed as
    $(P^t, (P^{\nu})^t f^{s_c z_{c}}, (P^{-\tau})^t (f^{-\phi_2})^{s_c
    z_{c}})$  where $P$ is a random element in $\G$. In this game, the
    challenge ciphertext gives no information about the random coin
    $\gamma$. Therefore, the adversary can win this game with probability
    at most $1/2$.
\end{description}
Let $\Adv_{\mcA}^{G_j}$ be the advantage of $\mcA$ in $\textbf{G}_j$ for
$j=0, \ldots, 4$. Let $\Adv_{\mcA}^{G_{1,k}}$ and $\Adv_{\mcA}^{G'_{1,k}}$ be
the advantage of $\mcA$ in $\textbf{G}_{1,k}$ and $\textbf{G}'_{1,k}$ for
$k=0, \ldots, q$. It is clear that
    $\Adv_{\mcA}^{AHIBE}(\lambda) = \Adv_{\mcA}^{G_0}$,
    $\Adv_{\mcA}^{G_{1,0}} = \Adv_{\mcA}^{G_1}$,
    $\Adv_{\mcA}^{G_{1,q}} = \Adv_{\mcA}^{G_2}$,
    and $\Adv_{\mcA}^{G_4} = 0$.
From the following five Lemmas, we obtain that it is hard to distinguish
$\textbf{G}_{i-1}$ from $\textbf{G}_i$ under the given assumptions.
Therefore, we have that
    \begin{align*}
    \Adv_{\mcA}^{AHIBE}(\lambda)
    & = \Adv_{\mcA}^{G_0} +
        \sum_{i=1}^3 \big( \Adv_{\mcA}^{G_i} - \Adv_{\mcA}^{G_i} \big) -
        \Adv_{\mcA}^{G_4}
    \leq \sum_{i=1}^{4} \big| \Adv_{\mcA}^{G_{i-1}} - \Adv_{\mcA}^{G_i} \big| \\
    & = \Adv_{\mcB_1}^{A1}(\lambda) +
        \sum_{k=1}^q \big(
            \Adv_{\mcB_2}^{A2}(\lambda) +
            \Adv_{\mcB_3}^{A3}(\lambda) \big) +
        \Adv_{\mcB_4}^{A4}(\lambda) + \Adv_{\mcB_5}^{A5}(\lambda).
    \end{align*}
This completes our proof of Theorem~\ref{thm:ahibe-prime}.
\end{proof}

\begin{lemma} \label{lem:ahibe-prime-1}
If Assumption 1 holds, then no PPT algorithm can distinguish between
$\textbf{G}_0$ and $\textbf{G}_1$ with a non-negligible advantage. That is,
for any adversary $\mcA$, there exists a PPT algorithm $\mcB_1$ such that
    $\big| \Adv_{\mcA}^{G_0} - \Adv_{\mcA}^{G_1} \big| =
    \Adv_{\mcB_1}^{A1}(\lambda)$.
\end{lemma}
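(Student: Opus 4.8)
The plan is to construct a PPT algorithm $\mcB_1$ that, on an Assumption~1 instance $(D,T)$, simulates $\textbf{G}_0$ for $\mcA$ when $T=T_0=g^{ab^2c}$ and $\textbf{G}_1$ when $T=T_1=g^d$, and then decides the assumption from whether $\mcA$ wins. Since the only difference between the two games is the (semi-)functionality of the single challenge ciphertext, the whole task is to place the assumption's uncertainty exactly into the semi-functional component $f^{s_c}$ of that ciphertext while keeping every other element perfectly simulatable.

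\emph{Setup and key queries.} $\mcB_1$ chooses all scheme exponents itself, namely $\alpha, y_h, \{y_{u_i}\}, y_w, \phi_1, \phi_2 \in \Z_p$, and implicitly sets the hidden parameter $\nu = a$ via $g^a$ from $D$, so that $g^{\nu}=g^a$, $h^{\nu}=(g^a)^{y_h}$, $u_i^{\nu}=(g^a)^{y_{u_i}}$, and $g^{-\tau}=g^{-\phi_1}(g^a)^{-\phi_2}$ with $\tau=\phi_1+\nu\phi_2$ are all computable. All $\hat{\G}$-elements of $PP$ together with $MK=(\hat g,\hat g^{\alpha},\hat h,\{\hat u_i\})$ are formed directly from $\hat g$ and the chosen exponents. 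Because every queried private key is normal in both games and $\mcB_1$ knows $MK$ in full, it answers all key queries by running $\textbf{KeyGen}$ honestly; the keys never involve $\nu$, so $a$ is never needed in the clear.

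\emph{Challenge ciphertext.} Given $(ID_0^*,ID_1^*,M_0^*,M_1^*)$, $\mcB_1$ flips $\gamma$, writes $Y=y_h+\sum_i y_{u_i}I_i^*$ for $ID_\gamma^*$, and implicitly sets the encryption randomness $t=b^2c$ via $g^{b^2c}$ from $D$. It outputs $C=e(g^{b^2c},\hat g^{\alpha})\,M_\gamma^*$, $C_{1,1}=g^{b^2c}$, $C_{1,2}=T$, $C_{1,3}=(g^{b^2c})^{-\phi_1}T^{-\phi_2}$, $C_{2,1}=(g^{b^2c})^{Y}$, $C_{2,2}=T^{Y}$, and $C_{2,3}=(g^{b^2c})^{-\phi_1 Y}T^{-\phi_2 Y}$. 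When $T=T_0=g^{ab^2c}=g^{\nu t}$, a direct check shows this is exactly a normal encryption of $M_\gamma^*$ under $ID_\gamma^*$, so $\mcA$'s view is $\textbf{G}_0$. When $T=T_1=g^d$, each occurrence of $T$ carries an extra factor $g^{d-ab^2c}$, which plays the role of $f^{s_c}$ with $f=g^{y_f}$ and $s_c=(d-ab^2c)/y_f$; the ciphertext then matches $\textbf{EncryptSF}$ with semi-functional parameters $s_c$ and $z_c=Y$, so $\mcA$'s view is $\textbf{G}_1$. Finally $\mcB_1$ outputs $0$ iff $\mcA$ guesses $\gamma$ correctly, which gives $\big|\Adv_{\mcA}^{G_0}-\Adv_{\mcA}^{G_1}\big|=\Adv_{\mcB_1}^{A1}(\lambda)$.

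The main obstacle is verifying that the $T=T_1$ case yields a ciphertext whose distribution is \emph{identical} to a genuine semi-functional one, despite two apparent defects of the embedding: $s_c$ must be uniform, which holds because $d$ is uniform and independent of $D$; and $z_c$ is forced to equal the identity hash $Y$ rather than a fresh uniform value. The latter is harmless precisely because in this hybrid every private key is normal, so the semi-functional components of the challenge ciphertext lie in a pairing-orthogonal direction to all normal-key components: they contribute the identity in $\G_T$ under every pairing $\mcA$ can form, whence $s_c$ and $z_c$ are information-theoretically hidden. One must also confirm throughout that $\mcB_1$ manipulates only the group elements supplied in $D$ (in particular $g^{b^2c}$, $g^a$, and $T$) and never needs $a$, $b$, $c$, or $d$ as plain exponents.
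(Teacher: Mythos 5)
There is a genuine gap in the $T=T_1$ branch: your simulated challenge ciphertext is \emph{not} distributed as in $\textbf{G}_1$. In $\textbf{EncryptSF}$ the exponent $z_c$ is a fresh uniform element of $\Z_p$, independent of everything else, whereas your embedding forces $z_c = Y = y_h + \sum_i y_{u_i} I_{\gamma,i}^*$, a value completely determined by the public parameters ($h = g^{y_h}$, $u_i = g^{y_{u_i}}$) and the challenge identity. Your justification --- that the semi-functional components are pairing-orthogonal to normal keys and hence $s_c, z_c$ are information-theoretically hidden --- is false: orthogonality only says these components vanish in the decryption pairings, but the raw group elements $C_{1,2}, C_{2,2}$ already determine $y_f s_c$ and $y_f s_c z_c$ in the exponent, so an unbounded adversary recovers $z_c$ exactly and can check the correlation $z_c = Y$ with the (likewise recoverable) $y_h, \{y_{u_i}\}$. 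The two distributions are therefore statistically far apart, and the hybrid chain to Lemma~\ref{lem:ahibe-prime-2} (which starts from the genuine $\textbf{G}_1$) does not go through. At best one could hope the correlation is computationally undetectable, but that would require an extra DDH-type assumption in $\G$ and is not what the lemma's exact equality claims.

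The paper closes exactly this hole with a parameter-hiding device your simulation omits: it sets $h = (k^{b^2})^{B} k^{y_h}$ and $u_i = (k^{b^2})^{A_i} k^{y_{u_i}}$, so the public parameters reveal only the combinations $b^2 B + y_h$ and $b^2 A_i + y_{u_i}$, leaving $B, \{A_i\}$ information-theoretically hidden; the implicit $z_c = \Gamma(ID_\gamma^*) = B + \sum_i A_i I_{\gamma,i}^*$ is then uniform conditioned on the adversary's entire view, matching $\textbf{G}_1$ exactly. The price is that $\hat{h}, \hat{u}_i$ now involve the unavailable $\hat{k}^{b^2}$, which is why the paper must also set $\phi_1 = b$, $\hat{w} = \hat{k}^{y_w}$ and shift the key randomness ($c_1 = -b(\alpha + \Gamma(ID) r_1)/y_w + c_1'$, etc.) to cancel those terms when answering key queries. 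Your setup avoids that machinery precisely because it discards the hidden randomness that makes the target distribution correct.
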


\begin{lemma} \label{lem:ahibe-prime-2}
If Assumption 2 holds, then no PPT algorithm can distinguish between
$\textbf{G}_{1,k-1}$ and $\textbf{G}'_{1,k}$ with a non-negligible advantage.
That is, for any adversary $\mcA$, there exists a PPT algorithm $\mcB_2$ such
that
    $\big| \Adv_{\mcA}^{G_{1,k-1}} - \Adv_{\mcA}^{G'_{1,k}} \big| =
    \Adv_{\mcB_2}^{A2}(\lambda)$.
\end{lemma}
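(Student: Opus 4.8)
The plan is to build a reduction $\mcB_2$ that, given an instance of Assumption~2, simulates $\mcA$'s view and interpolates between $\textbf{G}_{1,k-1}$ and $\textbf{G}'_{1,k}$ according to whether $T=\hat{g}^{bc}$ or $T=\hat{g}^{d}$. The first thing to notice is that these two games differ in a single object: the answer to the $k$-th key query, which is a normal key in $\textbf{G}_{1,k-1}$ and a semi-functional type-1 key in $\textbf{G}'_{1,k}$ (all keys $j<k$ are type-2 and all keys $j>k$ are normal in both games). So $\mcB_2$ only has to make the $k$-th key's semi-functional part appear exactly when $T$ is random. Since $\mcB_2$ may choose $\nu,\phi_1,\phi_2,\alpha,y_h,\{y_{u_i}\},y_w$ itself, it knows the full master key $MK$ and can therefore generate the public parameters and every normal key ($j>k$) directly.

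To set up the semi-functional machinery I would implicitly fix the generator $f=g^{y_f}$ to the exponent that Assumption~2 reveals in $\G$ but hides in $\hat{\G}$: the instance supplies $g^{a^2}$ and $g^{a^2x}$ in $\G$ but no matching element of $\hat{\G}$, so taking $y_f=a^2$ lets $\mcB_2$ form the semi-functional challenge ciphertext for $ID^*_{\gamma}$ --- using the $\G$-side terms to realize the normal exponent $t$ and the offsets $s_c$ and $z_c=y_h+\sum_i y_{u_i}I^*_i$ --- while leaving it unable to attach $\hat{f}$-terms to keys on its own. This asymmetry of the (asymmetric) group is precisely what forces the $k$-th key's semi-functional component to come from the challenge term $T$. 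For the keys $j<k$, $\mcB_2$ produces type-2 keys by attaching, to the decryption/delegation and (independently) the randomization components, offsets of the form $(\hat{g}^{c})^{\sigma}$ raised into the $(-\nu,1,0)$ pattern with fresh $\sigma$; because $y_f$ and $c$ are unknown but nonzero, these are uniform semi-functional coefficients with independent randomness in the two halves, which is exactly a type-2 key and needs only the known element $\hat{g}^{c}$. For the $k$-th key, $\mcB_2$ builds a normal key from $MK$ and then folds $T$ (with $\hat{g}^{c}$) into its randomness $r_1,r_2$ so that, when $T=\hat{g}^{bc}$, the added material stays inside the normal key space spanned by the $\hat{w}$-vector $(\phi_1,\phi_2,1)$ and the key is normal, whereas when $T=\hat{g}^{d}$ the mismatch $d-bc$ surfaces as a nonzero component along the semi-functional direction $(-\nu,1,0)$, giving a type-1 key. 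The $r_1$-absorption propagates the offset through $\hat{h}\prod\hat{u}_i^{I_i}$ and each $\hat{u}_i$, so the decryption/delegation components carry the shared coefficients $z_{k,1}=y_h+\sum_i y_{u_i}I^{(k)}_i$ and $z_{k,2,i}=y_{u_i}$, and the same absorption into $r_2$ makes the randomization components reuse those very coefficients --- which is why this step reaches only the type-1 (shared-$z$) key rather than the fully independent type-2 key.

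The hardest point, and the one I expect to demand the most care, is nominal semi-functionality. Because the global exponents $y_h,\{y_{u_i}\}$ feed both the ciphertext coefficient $z_c$ and the key coefficient $z_{k,1}$, the extra pairing factor produced when a (possibly delegated) $k$-th key is applied to the challenge ciphertext is proportional to $z_{k,1}-z_c$. If $\mcA$ could derive from the $k$-th key a key for $ID^*_{\gamma}$, this factor would vanish, the key would decrypt correctly even when $T=\hat{g}^{d}$ (a \emph{nominally} semi-functional key), and $T$ would then fail to change $\mcA$'s view. This is exactly where the game restriction saves the argument: $ID_k$ is not a prefix of $ID^*_0$ or $ID^*_1$, so neither $ID_k$ nor any identity it can be delegated to equals $ID^*_{\gamma}$, and hence (up to a negligible collision probability over $\Z_p$) $z_{k,1}-z_c\neq0$ for every key $\mcA$ can actually hold.

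Consequently, when $T=\hat{g}^{d}$ the $k$-th key is \emph{properly} semi-functional type-1 and $\mcB_2$ faithfully simulates $\textbf{G}'_{1,k}$, while $T=\hat{g}^{bc}$ reproduces $\textbf{G}_{1,k-1}$, yielding $\big|\Adv_{\mcA}^{G_{1,k-1}}-\Adv_{\mcA}^{G'_{1,k}}\big|=\Adv_{\mcB_2}^{A2}(\lambda)$. The remaining work is the routine but delicate exponent bookkeeping: verifying that, in both cases, all three slots of every key element (and of the challenge ciphertext) are jointly computable from the instance and are distributed exactly as in the corresponding game.
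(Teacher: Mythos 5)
Your overall architecture matches the paper's: only the $k$-th key changes between the two games, the simulator holds the master key so it can answer queries $j>k$ normally and $j<k$ as type-2, the challenge ciphertext is made semi-functional from the $\G$-side of the instance, and the challenge term $T$ is folded into the $k$-th key so that it is normal when $T$ is real and type-1 when $T$ is random. However, there is a genuine gap at exactly the point you flag as the hardest: the treatment of nominal semi-functionality. You set $h=g^{y_h}$, $u_i=g^{y_{u_i}}$ with $y_h,\{y_{u_i}\}$ known to the simulator, and then let $z_c=y_h+\sum_i y_{u_i}I^*_i$ and $z_{k,1}=y_h+\sum_i y_{u_i}I_i$. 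These coefficients are then information-theoretically determined by the public parameters, whereas in $\textbf{G}'_{1,k}$ the values $z_c$ and $z_{k,1},\{z_{k,2,i}\}$ are fresh, uniform, and independent of everything else. The property you need is not that $z_{k,1}-z_c\neq 0$ (your ``negligible collision'' argument addresses only non-vanishing of the decryption residue), but that the joint distribution of $(z_c, z_{k,1}, \{z_{k,2,i}\})$ conditioned on the adversary's view is uniform; with your parameterization it is a point mass, so the simulation is simply not distributed as $\textbf{G}'_{1,k}$. The paper repairs this by planting extra hidden degrees of freedom in the public parameters --- $h=(k^a)^B k^{y_h}$, $u_i=(k^a)^{A_i}k^{y_{u_i}}$, so the group elements reveal only $aB+y_h$ and $aA_i+y_{u_i}$ --- and then proving by a rank argument (which is where the non-prefix restriction is actually used) that the linear map from $(B,A_1,\ldots,A_l)$ to $(z_c, z_{k,1}, z_{k,2,m+1},\ldots,z_{k,2,l})$ has full row rank, so these values look jointly uniform even to an unbounded adversary. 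Without some analogue of this hidden-parameter argument your reduction does not establish the lemma.

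A second, smaller gap: a type-1 key needs two independent semi-functional scalars $s_{k,1}$ and $s_{k,2}$ (one for the decryption/delegation half, one for the randomization half, both sharing the same $z_{k,1},\{z_{k,2,i}\}$), but Assumption 2 supplies a single challenge term $T$. Folding the same $T$ into both $r_1$ and $r_2$ would correlate $s_{k,1}$ and $s_{k,2}$. The paper handles this by first rerandomizing Assumption 2 into an ``Assumption 2-A'' that provides two correlated pairs $(D_1,D_2)$, one per half of the key; you would need this (or a further hybrid over the two halves) to make the $k$-th key's distribution correct. Your choice $y_f=a^2$ also differs from the paper's $f=k$, $\hat f=\hat k$ with $\phi_2=b$ and a cancellation between $(g^{-\tau})^t$ and $(f^{-\phi_2})^{s_c}$; your variant can likely be made to work for computability of the ciphertext, but it changes none of the issues above.
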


\begin{lemma} \label{lem:ahibe-prime-3}
If Assumption 3 holds, then no PPT algorithm can distinguish between
$\textbf{G}'_{1,k}$ and $\textbf{G}_{1,k}$ with a non-negligible advantage.
That is, for any adversary $\mcA$, there exists a PPT algorithm $\mcB_3$ such
that
    $\big| \Adv_{\mcA}^{G'_{1,k}} - \Adv_{\mcA}^{G_{1,k}} \big| =
    \Adv_{\mcB_3}^{A3}(\lambda)$.
\end{lemma}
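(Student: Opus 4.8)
The plan is to build a PPT algorithm $\mcB_3$ that, on input an Assumption~3 (SXDH) instance $(g, \hat{g}, \hat{g}^a, \hat{g}^b, T)$ with either $T = \hat{g}^{ab}$ or $T = \hat{g}^c$, simulates either $\textbf{G}'_{1,k}$ or $\textbf{G}_{1,k}$ for $\mcA$ according to the nature of $T$. The only difference between these two games is that the randomization components $R_{1,\ast}, \{R_{3,i,\ast}\}$ of the $k$-th queried key carry the semi-functional exponents $s_{k,2}z_{k,1}, \{s_{k,2}z_{k,2,i}\}$ (type-1), which reuse the very exponents $z_{k,1}, \{z_{k,2,i}\}$ appearing in the decryption and delegation components, versus fresh independent exponents $s_{k,2}z_{k,3}, \{s_{k,2}z_{k,4,i}\}$ (type-2). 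Since the DDH relation in $\hat{\G}$ is exactly what distinguishes ``the same exponent reused'' from ``an independent exponent,'' the whole hop should reduce to one DDH decision.

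Because every secret of the scheme can be chosen by the reduction, $\mcB_3$ will pick $\nu, \phi_1, \phi_2, y_h, \{y_{u_i}\}, y_w, \alpha, y_f$ itself and thereby generate $PP$, the master key $MK$, and the semi-functional challenge ciphertext (which lives in $\G$ and is untouched by the $\hat{\G}$-instance) directly via \textbf{EncryptSF}. For the $j$-th key query it answers with a normal key when $j>k$ and with a type-2 key when $j<k$, both of which it can form outright since it knows $y_f$ and hence $\hat{f}$. All the work is concentrated in the $k$-th key.

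For the $k$-th key, $\mcB_3$ first builds a normal key from $MK$, then sets $s_{k,2} = a$ implicitly and embeds the DDH challenge into the randomization components. I would use the random self-reducibility of DDH to expand the single instance into $l-m+1$ correlated tuples $(\hat{g}^{a}, \hat{g}^{b_0}, T_0), (\hat{g}^{a}, \hat{g}^{b_i}, T_i)$ sharing the exponent $a$: picking fresh $\rho, \sigma$ and setting $\hat{g}^{b_j} = (\hat{g}^b)^{\rho}\hat{g}^{\sigma}$, $T_j = T^{\rho}(\hat{g}^a)^{\sigma}$ makes all $T_j$ simultaneously real ($T_j = \hat{g}^{a b_j}$) or simultaneously uniform. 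Identifying $z_{k,1} = b_0$ and $z_{k,2,i} = b_i$, the decryption and delegation semi-functional parts $\hat{f}^{s_{k,1}z_{k,1}}, \{\hat{f}^{s_{k,1}z_{k,2,i}}\}$ are computable as $(\hat{g}^{b_j})^{y_f s_{k,1}}$ because $s_{k,1}$ is a scalar $\mcB_3$ chooses and $y_f$ is known, whereas the randomization parts $\hat{f}^{s_{k,2}z_{k,1}}, \{\hat{f}^{s_{k,2}z_{k,2,i}}\}$ are supplied as $T_0^{y_f}, \{T_i^{y_f}\}$ (with the $-\nu$ twisted copies obtained by raising to $-\nu$). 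The $R_{2,\ast}$ base term $\hat{f}^{s_{k,2}} = (\hat{g}^a)^{y_f}$, identical in both types, is computed from $\hat{g}^a$. When $T_j = \hat{g}^{a b_j}$ the randomization exponents coincide with the decryption exponents, giving a type-1 key; when the $T_j$ are uniform and independent, $T_j^{y_f}$ is uniform, which is exactly a type-2 key with fresh $z_{k,3}, \{z_{k,4,i}\}$.

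The step I expect to be most delicate is precisely this asymmetric embedding: I must expose $\hat{g}^{b_j}$ in the clear so that the decryption and delegation components (which use the \emph{true} $z_{k,j}$ in both games) remain simulatable, while keeping the product $\hat{f}^{s_{k,2}z_{k,j}}$ hidden inside the challenge so that its ``reused versus independent'' status is decided only by $T$; getting the random self-reducibility to yield $l-m+1$ consistently-typed tuples from a single DDH instance, and checking that all twisted copies $(\hat{f}^{-\nu})^{(\cdot)}$ together with the untouched third components $R_{\ast,3}, K_{\ast,3}$ retain their correct joint distribution, is the bulk of the verification. Since $\mcB_3$'s simulation is perfect in each case and its output bit is $\mcA$'s guess, the two games are indistinguishable with $|\Adv_{\mcA}^{G'_{1,k}} - \Adv_{\mcA}^{G_{1,k}}| = \Adv_{\mcB_3}^{A3}(\lambda)$; note that the identity restriction plays no role in this hop, because the two games share identical decryption components and an identical challenge ciphertext.
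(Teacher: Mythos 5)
Your proposal is correct and follows essentially the same route as the paper: the paper also reduces to a multi-instance DDH in $\hat{\G}$ (its ``Assumption 3-A,'' obtained from Assumption 3 by exactly the random self-reducibility rerandomization you describe), exposes $\hat{k}^{z_{k,1}}, \{\hat{k}^{z_{k,2,i}}\}$ in the clear for the decryption and delegation components while embedding the challenge elements only into the randomization components with $s_{k,2}$ set to the shared DDH exponent, and generates everything else (public parameters, master key, other keys, semi-functional ciphertext) from self-chosen secrets. The only cosmetic difference is that you swap the roles of the two DDH exponents and carry an extra known scalar $y_f$ on $\hat{f}$, neither of which affects the argument.
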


\begin{lemma} \label{lem:ahibe-prime-4}
If Assumption 4 holds, then no PPT algorithm can distinguish between
$\textbf{G}_2$ and $\textbf{G}_3$ with a non-negligible advantage. That is,
for any adversary $\mcA$, there exists a PPT algorithm $\mcB_4$ such that
    $\big| \Adv_{\mcA}^{G_2} - \Adv_{\mcA}^{G_3} \big| =
    \Adv_{\mcB_4}^{A4}(\lambda)$.
\end{lemma}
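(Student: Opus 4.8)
The plan is to build a PPT algorithm $\mcB_4$ that, on a DBDH instance $D, T$ from Assumption 4, simulates either $\textbf{G}_2$ or $\textbf{G}_3$ for $\mcA$ so that a real $T = e(g,\hat{g})^{abc}$ yields $\textbf{G}_2$ and a uniform $T = e(g,\hat{g})^d$ yields $\textbf{G}_3$. The only difference between the two games is the message-carrying component $C$: in $\textbf{G}_2$ it is the genuine blinding $\Omega^t M_{\gamma}^* = e(g,\hat{g})^{\alpha t} M_{\gamma}^*$, while in $\textbf{G}_3$ it is uniform in $\G_T$. I would therefore couple the challenge blinding factor to the DBDH term by implicitly fixing the master-key exponent and the challenge randomness so that $e(g,\hat{g})^{\alpha t} = e(g,\hat{g})^{abc}$; concretely, take $t = c$ (so $C_{1,1} = g^c$ is read off the instance) and let $\alpha$ play the role of $ab$, publishing $\Omega = e(g^a,\hat{g}^b)$ and setting $C = T \cdot M_{\gamma}^*$. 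A real $T$ then reproduces exactly the $\textbf{G}_2$ ciphertext and a random $T$ the $\textbf{G}_3$ ciphertext.

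All remaining exponents ($\nu, \phi_1, \phi_2, \tau$, the $y$-values, and the semi-functional exponent $y_f$) would be chosen honestly, so that the whole of $PP$ together with $f = g^{y_f}$ and $\hat{f} = \hat{g}^{y_f}$ is computable. The semi-functional challenge ciphertext for $ID_{\gamma}^*$ is then assembled from $g^c$: its normal components are $(g^c)^{(\cdot)}$ raised to known exponents, and its semi-functional part is generated with freshly chosen $s_c, z_c$ exactly as in \textbf{EncryptSF}. The coin $\gamma$ and the messages $M_0^*, M_1^*$ are treated as in the real game, and checking that the result is distributed as a semi-functional encryption of $M_{\gamma}^*$ under the embedded $\Omega$ is routine once the exponents are matched.

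The hard part, and the step I expect to be the main obstacle, is answering the private-key queries: all keys must be returned as semi-functional type-2 keys whose master-secret part $\hat{g}^{\alpha}$ inside $K_{1,1}$ is consistent with the published $\Omega$ — a consistency the adversary can probe by self-encrypting to any queried identity and decrypting, which forces the key's $\alpha$ to equal the one in $\Omega$. Since $\alpha$ is tied to the DBDH product, the element $\hat{g}^{ab}$ is not directly available from the instance, so the central difficulty is to realize every $\hat{\G}$-component of each key using only $\hat{g}, \hat{g}^a, \hat{g}^b, \hat{g}^c$ together with the freedom in the implicit key randomness, letting the (now fully independent) semi-functional randomization of the type-2 keys carry the master-secret term while preserving the correct distribution. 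This is exactly where the careful design of the semi-functional components is meant to pay off, and soundness of the whole switch rests on the challenge restriction that no queried identity is a prefix of $ID_{\gamma}^*$, so that $\mcA$ can never pair the challenge ciphertext against a matching key and thereby expose $e(g,\hat{g})^{\alpha t}$. Once the implicit key randomness is arranged so that each component is computable and the key remains properly distributed, a real $T$ reproduces $\textbf{G}_2$ and a random $T$ reproduces $\textbf{G}_3$ verbatim, giving $\big|\Adv_{\mcA}^{G_2} - \Adv_{\mcA}^{G_3}\big| = \Adv_{\mcB_4}^{A4}(\lambda)$ as claimed.
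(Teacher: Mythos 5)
Your high-level framing is right -- embed the DBDH challenge by setting $t=c$, $\alpha=ab$, $\Omega=e(g^a,\hat{g}^b)$ and $C=T\cdot M_{\gamma}^*$ -- and you correctly flag the private-key queries as the main obstacle. But the concrete parameter choices you commit to make that obstacle insurmountable. You propose to choose $\nu$, $\phi_1$, $\phi_2$, $\tau$ and $y_f$ honestly (i.e., known to the simulator). Then the semi-functional part of a type-2 key is $(\hat{f}^{-\nu})^{s_{k,1}z_{k,1}}=\hat{g}^{-y_f\nu s_{k,1}z_{k,1}}$ with every exponent other than $z_{k,1}$ known. To absorb $\hat{g}^{\alpha}=\hat{g}^{ab}$ into $K_{1,1}$ you would have to set $s_{k,1}z_{k,1}=ab/(y_f\nu)+(\text{known})$; but then the companion component $K_{1,2}=(\hat{w}^{\phi_2})^{c_1}\hat{f}^{s_{k,1}z_{k,1}}$ forces you to compute $\hat{g}^{ab/\nu}$, i.e., to solve CDH in $\hat{\G}$, which the DBDH instance does not permit. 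So "letting the semi-functional randomization carry the master-secret term" cannot be realized with an honestly chosen $\nu$: there is no assignment of the implicit randomness that makes all three components $K_{1,1},K_{1,2},K_{1,3}$ simultaneously computable.

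The paper's resolution is precisely to \emph{not} choose $\nu$ honestly: it implicitly sets $\nu=a$ (and $f=k$, $\hat{f}=\hat{k}$, $\tau=\phi_1+a\phi_2$), so the semi-functional direction $\hat{f}^{-\nu}=\hat{k}^{-a}$ already carries the factor $a$. Setting $z_{k,1}=by_g/s_{k,1}+z'_{k,1}$ then cancels $\hat{g}^{\alpha}=\hat{k}^{y_g ab}$ inside $K_{1,1}$, while $K_{1,2}$ only requires $(\hat{k}^{b})^{y_g}$, which is available. This choice propagates to the challenge ciphertext: since $(g^{\nu})^{t}=k^{y_g ac}$ is no longer computable, the simulator must also implicitly set $s_c=-acy_g+s'_c$ (and a corresponding $z_c$) to cancel the $k^{ac}$ terms, rather than picking $s_c,z_c$ freshly as you suggest. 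A secondary quibble: the reason $\mcA$ cannot expose $e(g,\hat{g})^{\alpha t}$ is not the prefix restriction but the fact that the simulator is structurally confined to producing semi-functional type-2 keys and a semi-functional ciphertext, whose mutual decryption is blinded by the extra $e(f,\hat{f})^{(\cdot)}$ factor. As written, your reduction cannot answer a single key query, so the proof has a genuine gap.
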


\begin{lemma} \label{lem:ahibe-prime-5}
If Assumption 5 holds, then no PPT algorithm can distinguish between
$\textbf{G}_3$ and $\textbf{G}_4$ with a non-negligible advantage. That is,
for any adversary $\mcA$, there exists a PPT algorithm $\mcB_5$ such that
    $\big| \Adv_{\mcA}^{G_3} - \Adv_{\mcA}^{G_4} \big| =
    \Adv_{\mcB_5}^{A5}(\lambda)$.
\end{lemma}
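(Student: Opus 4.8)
The plan is to build a distinguisher $\mcB_5$ for Assumption~5 that runs $\mcA$ and perfectly simulates $\textbf{G}_3$ when its challenge term is $T=T_0=g^{abc}$ and $\textbf{G}_4$ when $T=T_1=g^d$. Since the two simulated worlds will then be distributed exactly as the respective games, $\mcB_5$'s advantage equals $\big|\Adv_{\mcA}^{G_3}-\Adv_{\mcA}^{G_4}\big|$, which is the claimed equality. The only difference between the two games is that the ``clean'' part of the second ciphertext block changes from the base $h\prod_{i}u_i^{I^*_i}$ to an independent random base $P$, while $C_{1,1}=g^t$ and the semi-functional $f$-terms are untouched. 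So the whole task is to randomize the pair $\big(g^t,(h\prod_i u_i^{I^*_i})^t\big)$ into $\big(g^t,P^t\big)$ using the single instance term $T$, which is exactly the Diffie--Hellman-style statement packaged into Assumption~5.

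For the setup I would take $g,\hat{g}$ from $\mcB_5$'s input, identify the ciphertext exponent $t$ with the instance exponent $c$, and embed the hidden product into the master hash base $h$ so that $h\prod_i u_i^{I^*_i}$ secretly carries a factor $g^{ab}$; then raising it to $t=c$ produces $g^{abc}=T_0$ for the target identity. I would fix $\alpha,\phi_1,\phi_2,y_w$ and the $\{y_{u_i}\}$ honestly, and choose $\nu$ and $\tau=\phi_1+\nu\phi_2$ so that the companions $g^{\nu},g^{-\tau}$ and the twists $h^{\nu},h^{-\tau},u_i^{\nu},u_i^{-\tau}$ in the public parameters are all expressible from the $\G$-side elements $g,g^a,g^b,g^{ab},g^{a^2b}$; the augmenting element $g^{a^2b}$ is precisely what makes the relevant twist of the challenge-carrying hash base computable. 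On the $\hat{\G}$ side the data $\hat{h},\hat{u}_i,\hat{w}^{\phi_1},\hat{w}^{\phi_2},\hat{w}$, the semi-functional generator $\hat{f}$, and $\hat{f}^{-\nu}$ are assembled from $\hat{g},\hat{g}^a,\hat{g}^b$, which is exactly why the instance provides only these single-power elements in $\hat{\G}$.

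Next I would answer every private-key query by returning a semi-functional type-2 key, as required in $\textbf{G}_3$: the normal part is built from the hatted parameters and the honest exponents, and the type-2 part is built from $\hat{f}$ and $\hat{f}^{-\nu}$ using freshly sampled $s_{k,1},z_{k,1},\{z_{k,2,i}\},s_{k,2},z_{k,3},\{z_{k,4,i}\}$, the independent randomizers of the randomization block being what let this part be realized consistently. For the challenge on $(ID^*_\gamma,M^*_\gamma)$ I would set $C_{1,1}=g^c$ with its honest companions $C_{1,2},C_{1,3}$, let $C$ be uniform in $\G_T$ (as it already is in both $\textbf{G}_3$ and $\textbf{G}_4$), and form the second block from $T$ as $C_{2,1}=T\cdot(g^c)^{\theta}$ for the known offset exponent $\theta$, obtaining $C_{2,2},C_{2,3}$ by applying the known $\nu$- and $-\tau$-twists to $C_{2,1}$ and multiplying in the semi-functional terms $f^{s_c z_c}$ and $(f^{-\phi_2})^{s_c z_c}$ with a freshly chosen $s_c,z_c$. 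When $T=g^{abc}$ this reproduces exactly $(h\prod_i u_i^{I^*_i})^t$ as the base, i.e.\ $\textbf{G}_3$; when $T=g^d$ the base is an independent uniform $P$, i.e.\ $\textbf{G}_4$.

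The hard part will be discharging the private-key queries while the identity hash secretly carries the challenge factor. Because only the single powers $\hat{g}^a,\hat{g}^b$ are available in $\hat{\G}$, no degree-two term in the hidden exponents can be formed there, so a direct computation of $\hat{h}\prod_i\hat{u}_i^{I_i}$ is unavailable whenever that base must mirror the $g^{ab}$-carrying $\G$-side base. Resolving this is exactly what the non-prefix restriction and the dual-system structure buy: since every queried $ID_i$ fails to be a prefix of $ID^*_\gamma$, such a key can never mount the one pairing test $e(C_{2,1},\hat{g})\stackrel{?}{=}e(C_{1,1},\hat{h}\prod_i\hat{u}_i^{I^*_i})$ that would expose whether the second block is honestly formed, and the orthogonality $\phi_1+\nu\phi_2-\tau=0$ together with the type-2 semi-functional $f$-masking guarantees that the residue surviving any admissible pairing is independent of the hidden factor. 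I would therefore need to verify carefully that, on one hand, $\mcB_5$ can realize every object $\mcA$ is entitled to see purely from the instance, and on the other hand the only quantity correlating the two ciphertext blocks is precisely the term $T$, so that the simulation is perfect in each world and the advantage transfers with equality.
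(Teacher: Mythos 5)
Your high-level embedding matches the paper's: $h$ and the $u_i$ secretly carry the factor $g^{ab}$ (which is exactly why $g^{a^2 b}$ is needed for the $\nu$-twists), $t$ is identified with $c$, and $T$ is planted in the $C_{2,*}$ block so that $T_0$ reproduces $(h\prod_i u_i^{I^*_i})^t$ while $T_1$ yields the independent base $P$. However, there is a genuine gap in how your simulator actually produces the objects it must hand to $\mcA$, and the fix you offer is not a valid argument. You correctly observe that $\hat{h}\prod_i \hat{u}_i^{I_i}$ cannot be computed (it would require $\hat{g}^{ab}$), but you then resolve this by appealing to the non-prefix restriction and to which pairing tests the adversary could mount. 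That reasoning concerns what $\mcA$ can \emph{distinguish}; it does not help the simulator, which must output concrete group elements $K_{1,1}, L_{3,i,1}, R_{1,1}, R_{3,i,1}$, each containing $(\hat{h}\prod_i\hat{u}_i^{I_i})^{r}$ or $\hat{u}_i^{r}$ as a factor. The non-prefix restriction plays no role in this lemma (it is used in the $\textbf{G}_{1,k-1}/\textbf{G}'_{1,k}$ step); here \emph{every} queried identity's key carries the same uncomputable $\hat{k}^{ab(\cdot)}$ factor. The paper's resolution, which your proposal omits, is that the semi-functional exponents are not ``freshly sampled'' but implicitly correlated with the instance, e.g.\ $z_{k,1} = b\Delta(ID)r_1/s_{k,1} + z'_{k,1}$, so that $(\hat{f}^{-\nu})^{s_{k,1}z_{k,1}} = \hat{k}^{-ab\Delta(ID)r_1}\hat{k}^{-a s_{k,1}z'_{k,1}}$ cancels the uncomputable term while the partner component $\hat{f}^{s_{k,1}z_{k,1}} = \hat{k}^{b\Delta(ID)r_1}\hat{k}^{s_{k,1}z'_{k,1}}$ remains computable from $\hat{k}^b$. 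This cancellation is precisely why the simulator can only emit semi-functional type-2 keys, matching the key population of $\textbf{G}_3$ and $\textbf{G}_4$, and the fresh $z'$ offsets restore the correct uniform distribution.

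The same correlation is needed on the ciphertext side, where your proposal again takes $s_c, z_c$ to be freshly chosen, forms $C_{1,2},C_{1,3}$ as ``honest companions,'' and obtains $C_{2,2},C_{2,3}$ by ``applying the known $\nu$-twist to $C_{2,1}$.'' Neither is computable: the honest part of $C_{1,2}$ is $(g^{\nu})^{t} = k^{ac y_g}$, which needs $k^{ac}$ (not supplied by the instance), and $C_{2,1}^{\nu} = T^{a\Delta(ID^*_\gamma)}$ needs $T^a$. The paper instead sets $s_c = -ac y_g + s'_c$ and $z_c = -a^2bc\,\Delta(ID^*_\gamma)/s_c + abc\, z'_c/s_c$ so that $f^{s_c}$ and $f^{s_c z_c}$ absorb the uncomputable $k^{ac}$ and $k^{a^2bc}$ factors, leaving $k^{s'_c}$ and $T^{z'_c}$. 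Without these implicit settings the reduction cannot be carried out, so the proposal as written does not establish the lemma.
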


\noindent The security proof of Lemmas~\ref{lem:ahibe-prime-1},
\ref{lem:ahibe-prime-2}, \ref{lem:ahibe-prime-3}, \ref{lem:ahibe-prime-4},
and \ref{lem:ahibe-prime-5} are given in Section~\ref{sec:lem-proof}.

\subsection{Extensions}

\textbf{Relaxed Security Model.} The original security experiment of
anonymous HIBE requires that an adversary should select two hierarchical
identities $ID_0^*, ID_1^* \in \mathcal{I}^n$ with equal depth $n$
\cite{AbdallaBC+05}. One possible relaxation of the security experiment of
anonymous HIBE is to allow the adversary to select two hierarchical
identities $ID_0^* \in \mathcal{I}^{n_1}, ID_1^* \in \mathcal{I}^{n_2}$ with
different depths $n_1, n_2$. Our scheme is also fully secure in this relaxed
security experiment since the ciphertext size is constant. The two challenge
hierarchical identities with different depths only matter in the security
proof that distinguishes $\textbf{G}_3$ from $\textbf{G}_4$. In that proof,
we showed that the adversary cannot distinguish the challenge hierarchical
identity $ID_{\gamma}^*$ from a random value. Thus our scheme is secure in
this relaxed experiment since the ciphertext size does not reveal the depth
of the hierarchical identity.

\section{Performance Analysis}

In this section, we analyze the running time of our scheme, and then we
measure the performance of the scheme by implementing it.

\subsection{Runtime Analysis}

To analyze the efficiency of our scheme, we use the abstract cost of
expensive mathematical operations. In bilinear groups, the expensive
operations are exponentiation operations and pairing operations.
Additionally, the efficiency of exponentiations and pairings can be improved
by doing $m$-term exponentiations and $m$-term pairings respectively. The
abstract cost of these operations is defined as follows:

\begin{itemize}
\item $\textsf{MPairCost}(\G, \hat{\G}, m)$: $m$-term pairing
    $\prod_{i=1}^m e(g_i, \hat{h}_i)$ where $g_i \in \G, h_i \in
    \hat{\G}$
\item $\textsf{PairCost}(\G, \hat{\G})$: pairing $e(g,\hat{h})$ where $g
    \in \G, h \in \hat{\G}$
\item $\textsf{MExpCost}(\G, m)$: $m$-term exponentiation $\prod_{i=1}^m
    g_i^{a_i}$ where $g_i \in \G$
\item $\textsf{ExpCost}(\G)$: exponentiation $g^a$ where $g \in \G$
\end{itemize}

Let $l$ be the maximum number of hierarchical depth and $d$ be the depth of
$ID$. We define the abstract costs of the setup algorithm, the key generation
algorithm, the delegation algorithm, the encryption algorithm, and the
decryption algorithm as $\textsf{SetupCost}, \textsf{GenCost}, \textsf{DelCost},
\textsf{EncCost}, \textsf{DecCost}$ respectively. The abstract costs of these
algorithm are obtained as follows:
    \begin{align*}
    & \textsf{SetupCost}(l) \geq ~
        (2l+4) * \textsf{ExpCost}(\G) +
        2 * \textsf{ExpCost}(\hat{\G})
        + \textsf{PairCost}(\G,\hat{\G}), \\
    & \textsf{GenCost}(l,d) \geq ~
        (4(l-d)+4) * \textsf{ExpCost}(\hat{\G})
        + (2(l-d)+2) * \textsf{MExpCost}(\hat{\G}, 2) \\
    &   \qquad\qquad\qquad\quad
        + \frac{d}{m} * \textsf{MExpCost}(\hat{\G}, m),
        \displaybreak[0] \\
    & \textsf{DelCost}(l,d) \geq ~
        (6(l-d)+6) * \textsf{MExpCost}(\hat{\G}, 2)
        + 9 * \textsf{ExpCost}(\hat{\G}), \\
    & \textsf{EncCost}(d) \geq ~
        \frac{3d}{m} * \textsf{MExpCost}(\G, m) +
        6 * \textsf{ExpCost}(\G)
        + \textsf{ExpCost}(\G_T), \\
    & \textsf{DecCost} \geq ~
        2 * \textsf{MPairCost}(\G, \hat{\G}, 3).
    \end{align*}

In asymmetric bilinear groups, the bit size of $\hat{\G}$ and the bit size of
$\G_T$ increase proportionally to the embedding degree of asymmetric bilinear
groups. Thus the cost of exponentiation in $\hat{\G}$ is higher than the cost
of exponentiation in $\G$. In our scheme, the cost of the key generation
algorithm and the cost of the delegation algorithm are higher than the cost
of other algorithm since our scheme uses group elements in $\G$ for
ciphertexts and group elements in $\hat{\G}$ for private keys, and these
costs decrease proportionally to the depth of $ID$. The cost of the
encryption algorithm is small since it uses $m$-term exponentiations in $\G$,
and the cost of the decryption algorithm is constant.

\subsection{Implementation}

To show the efficiency of our scheme, we present the implementation of our
scheme and analyze the performance of it. We use the Pairing Based
Cryptography (PBC) library \cite{PBC} to implement our scheme, and we use a
notebook computer with an Intel Core i5 2.53 GHz CPU as a test machine. We
select a 175-bit Miyaji-Nakabayashi-Takano (MNT) curve with embedding degree
6. In the 175-bit MNT curve, the group size of $\G$ is about 175 bits, the
group size of $\hat{\G}$ is about 525 bits, and the group size of $\G_T$ is
about 1050 bits. The PBC library on the test machine can compute an
exponentiation of $\G$ in 1.6 ms, an exponentiation of $\hat{\G}$ in 20.3 ms,
an exponentiation of $\G_T$ in 4.7 ms, and a pairing in 15.6 ms.
Additionally, the PBC library can compute a three-term multi-exponentiation
of $\G$ in 2.1 ms, a two-term multi-exponentiation of $\hat{\G}$ in 27.3 ms,
a three-term multi-exponentiation of $\hat{\G}$ in 28.6 ms, and a three-term
multi-pairing in 31.2 ms. Therefore, we can obtain the cost of our scheme
using the 175-bit MNT curve on the test machine as follows:
    \begin{align*}
    & \textsf{GenCost}(l,d) \geq ~ 135.8 * (l-d) + 9.5 * d + 135.8 ~\mbox{ms}, \\
    & \textsf{DelCost}(l,d) \geq ~ 163.8 * (l-d) + 346.5 ~\mbox{ms}, \\
    & \textsf{EncCost}(d)   \geq ~ 2.1 * d + 14.3 ~\mbox{ms}, \\
    & \textsf{DecCost}      \geq ~ 62.4 ~\mbox{ms}.
    \end{align*}

\begin{figure*}[t]
\centering \small
\include{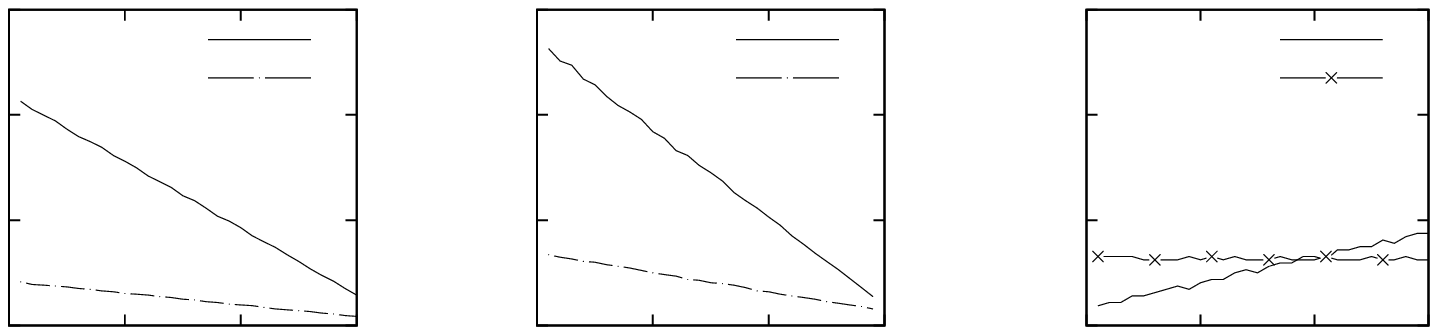}
\caption{Performance of our HIBE scheme} \label{fig:perf-hibe}
\end{figure*}

Let $l=30$. The performance results of each algorithms are described in
Figure~\ref{fig:perf-hibe}. The setup algorithm takes about 0.936 seconds to
generate the public parameters and the master key. The key generation
algorithm and the delegation algorithm for one depth take about 4.259 seconds
and 5.257 seconds respectively. One method to improve the performance of the
key generation algorithm is to preprocess the public parameters and the
master key. If the preprocessing method is used, then the cost of the key
generation algorithm is reduced to 1/5. This method also can be used in the
delegation algorithm.

\section{Proof of Lemmas} \label{sec:lem-proof}

In this section, we give the security proofs of Lemmas for our HIBE scheme.

\subsection{Proof of Lemma~\ref{lem:ahibe-prime-1}
(Indistinguishability of $\textbf{G}_0$ and $\textbf{G}_1$)}


In this proof, private keys are normal and the challenge ciphertext should be
normal or semi-functional depending on the $T$ value of the given assumption.
The main idea of this proof is that a simulator can only create normal
private keys since an element for semi-functional private keys is not given
in the assumption and the simulator embeds the $T$ element of the assumption
into the challenge ciphertext.

\vs \textbf{Simulator}. Suppose there exists an adversary $\mcA$ that
distinguishes between $\textbf{G}_0$ and $\textbf{G}_1$ with a non-negligible
advantage. A simulator $\mcB_1$ that breaks Assumption 1 using $\mcA$ is
given: a challenge tuple
    $D = ((p, \G, \hat{\G}, \G_T, e),
    k, k^a, k^b, k^{ab^2}, k^{b^2}, k^{b^3},
    k^c, k^{ac}, k^{bc}, k^{b^2 c}, k^{b^3 c}, \hat{k}, \hat{k}^b)$
and $T$ where $T = T_0 = k^{ab^2c}$ or $T = T_1 = k^{ab^2c + d}$. Then
$\mcB_1$ that interacts with $\mcA$ is described as follows:
$\mcB_1$ first chooses random exponents $\phi_2, B, \{ A_i \}_{i=1}^l, \alpha
\in \Z_p$ and random blinding values $y_g, y_h, \{ y_{u_i} \}_{i=1}^l, y_w
\in \Z_p$. It implicitly sets $\nu = a, \phi_1 = b, \tau = b + a \phi_2$ and
creates the public parameters as
    \begin{align*}
    &   g = k^{b^2} k^{y_g},~
        g^{\nu} = k^{ab^2} (k^a)^{y_g},~
        g^{-\tau} = (k^{b^3} (k^b)^{y_g}
            (k^{ab^2})^{\phi_2} (k^a)^{y_g \phi_2} )^{-1},~ \\
    &   h = (k^{b^2})^B k^{y_h},~
        h^{\nu} = (k^{ab^2})^B (k^a)^{y_h},~
        h^{-\tau} = ((k^{b^3})^B (k^b)^{y_h} (k^{ab^2})^{B \phi_2}
            (k^a)^{y_h \phi_2} )^{-1},~ \\
    &   \big\{
        u_i = (k^{b^2})^{A_i} k^{y_{u_i}},~
        u_i^{\nu} = (k^{ab^2})^{A_i} (k^a)^{y_{u_i}},~
        u_i^{-\tau} = ((k^{b^3})^{A_i} (k^b)^{y_{u_i}}
            (k^{ab^2})^{A_i \phi_2} (k^a)^{y_{u_i} \phi_2} )^{-1}
        \big\}_{i=1}^l,~ \\
    &   \hat{w}^{\phi_1} = (\hat{k}^b)^{y_w},~
        \hat{w}^{\phi_2} = \hat{k}^{y_w \phi_2},~
        \hat{w} = \hat{k}^{y_w},~
        \Omega = (e(k^{b^3}, \hat{k}^b) \cdot e(k^{b^2}, \hat{k})^{2 y_g}
            \cdot e(k, \hat{k})^{y_g^2})^{\alpha}.
    \end{align*}
It also implicitly sets
    $\hat{g} = \hat{k}^{b^2} \hat{k}^{y_g}, \hat{h} = \hat{k}^{b^2 B} \hat{k}^{y_h},
    \hat{u}_i = \hat{k}^{b^2 A_i} \hat{k}^{y_{u_i}}$
for the master key, but it cannot create these elements since $\hat{k}^{b^2}$
is not given. Additionally, it sets $f = k, \hat{f} = \hat{k}$ for the
semi-functional ciphertext and private key.
Let $\Delta(ID) = y_h + \sum_{i=1}^m y_{u_i} I_i$ and $\Gamma(ID) = B +
\sum_{i=1}^m A_i I_i$ where $ID = (I_1, \ldots, I_m)$. $\mcA$ adaptively
requests a private key for $ID = (I_1, \ldots, I_m)$. To response the private
key query, $\mcB_1$ first selects random exponents $r_1, c'_1, c'_2, \{
c'_{3,i} \}_{i=m+1}^l \in \Z_p$. It implicitly sets
    $c_1 = -b (\alpha + \Gamma(ID) r_1) / y_w + c'_1,~
    c_2 = -b r_1 / y_w + c'_2,~
    \{ c_{3,i} = -b A_i r_1 / y_w + c'_{3,i} \}_{i=m+1}^l$
and creates the decryption and delegation components of a private key as
    \begin{align*}
    &   K_{1,1} = \hat{k}^{y_g \alpha + \Delta(ID) r_1} (\hat{w}^{\phi_1})^{c'_1},~
        K_{1,2} = (K_{1,3})^{\phi_2},~
        K_{1,3} = (\hat{k}^b)^{-(\alpha + \Gamma(ID) r_1)} \hat{w}^{c'_1},~ \\
    &   K_{2,1} = \hat{k}^{y_g r_1} (\hat{w}^{\phi_1})^{c'_2},~
        K_{2,2} = (K_{2,3})^{\phi_2},~
        K_{2,3} = (\hat{k}^b)^{-r_1} \hat{w}^{c'_2},~ \\
    &   \big\{
        L_{3,i,1} = \hat{k}^{y_{u_i} r_1} (\hat{w}^{\phi_1})^{c'_{3,i}},~
        L_{3,i,2} = (L_{3,i,3})^{\phi_2},~
        L_{3,i,3} = (\hat{k}^b)^{-A_i r_1} \hat{w}^{c'_{3,i}}
        \big\}_{i=m+1}^l.
    \end{align*}
It also creates the randomization components of a private key similarly by
selecting random exponents $r_2, c'_4, c'_5, \{ c'_{6,i} \}_{i=n+1}^l \in
\Z_p$ except that $R_{1,1}$ does not have $\hat{g}^{\alpha}$. We omit the
detailed description of these.
In the challenge step, $\mcA$ submits two challenge hierarchical identities
$ID_0^* = (I_{0,1}^*, \ldots, I_{0,n}^*), ID_1^* = (I_{1,1}^*, \ldots,
I_{1,n}^*)$ and two messages $M_0^*, M_1^*$. $\mcB_1$ flips a random coin
$\gamma \in \{0,1\}$ internally. It implicitly sets $t = c$ and creates a
challenge ciphertext as
    \begin{align*}
    &   C       = (e(k^{b^3 c}, \hat{k}^b) \cdot e(k^{b^2 c}, \hat{k})^{2 y_g}
                  \cdot e(k^c, \hat{k})^{y_g^2})^{\alpha} \cdot M_{\gamma}^*,~ \\
    &   C_{1,1} = k^{b^2 c} (k^c)^{y_g},~
        C_{1,2} = T (k^{ac})^{y_g},~
        C_{1,3} = ((k^{b^3 c}) (k^{bc})^{y_g} (T)^{\phi_2} (k^{ac})^{y_g \phi_2})^{-1}, \\
    &   C_{2,1} = (k^{b^2 c})^{\Gamma(ID_{\gamma}^*)} (k^c)^{\Delta(ID_{\gamma}^*)},~
        C_{2,2} = (T)^{\Gamma(ID_{\gamma}^*)}
                  (k^{ac})^{\Delta(ID_{\gamma}^*)},~ \\
    &   C_{2,3} = \big(
                  (k^{b^3 c})^{\Gamma(ID_{\gamma}^*)}
                  (k^{bc})^{\Delta(ID_{\gamma}^*)}
                  (T)^{\phi_2 \Gamma(ID_{\gamma}^*)}
                  (k^{ac})^{\phi_2 \Delta(ID_{\gamma}^*)} \big)^{-1}.
    \end{align*}
Finally, $\mcA$ outputs a guess $\gamma'$. If $\gamma = \gamma'$, $\mcB_1$
outputs 0. Otherwise, it outputs 1.

\vs \textbf{Analysis}. We first show that the distribution of the simulation
using $D, T = T_0 = k^{ab^2c}$ is the same as $\textbf{G}_0$. The public
parameters are correctly distributed since the random blinding values $y_g,
y_h, \{y_{u_i}\}, y_w$ are used. The private key is correctly distributed as
    \begin{align*}
    K_{1,1}
        &= \hat{g}^{\alpha} (\hat{h} \prod_{i=1}^m \hat{u}_i^{I_i})^{r_1}
            (\hat{w}^{\phi_1})^{c_1}
         = (\hat{k}^{b^2 + y_g})^{\alpha} (\hat{k}^{b^2 B + y_h}
            \prod_{i=1}^m \hat{k}^{(b^2 A_i + y_{u_i})I_i})^{r_1}
            (\hat{k}^{b y_w})^{-b(\alpha + \Gamma(ID) r_1) / y_w + c'_1} \\
        &= \hat{k}^{y_g \alpha + \Delta(ID) r_1} (\hat{w}^{\phi_1})^{c'_1},~
            \displaybreak[0] \\
    K_{2,1}
        &= \hat{g}^{r_1} (\hat{w}^{\phi_1})^{c_2}
         = (\hat{k}^{b^2 + y_g})^{r_1} (\hat{k}^{b y_w})^{-b r_1 / y_w + c'_2}
         = \hat{k}^{y_g r_1} (\hat{w}^{\phi_1})^{c'_2},~ \\
    L_{3,i,1}
        &= \hat{u}_i^{r_1} (\hat{w}^{\phi_1})^{c_{3,i}}
         = (\hat{k}^{b^2 A_i + y_{u_i}})^{r_1}
           (\hat{k}^{b y_w})^{-b A_i r_1 / y_w + c'_{3,i}}
         = \hat{k}^{y_{u_i} r_1} (\hat{w}^{\phi_1})^{c'_{3,i}}.
    \end{align*}
Note that it can create a normal private key since $c_1, c_2, \{ c_{3,i} \},
c_4, c_5, \{ c_{6,i} \}$ enable the cancellation of $\hat{k}^{b^2}$, but it
cannot create a semi-functional private key since $\hat{k}^a$ is not given.
The challenge ciphertext is correctly distributed as
    \begin{align*}
    C_{1,1} &= g^t = (k^{b^2 + y_g})^c = k^{b^2 c} (k^c)^{y_g},~
    C_{1,2}  = (g^{\nu})^t = k^{(b^2 + y_g)ac} = T_0 (k^{ac})^{y_g},~ \\
    C_{1,3} &= (g^{-\tau})^t = (k^{(b^2 + y_g) (b + a \phi_2) c})^{-1}
             = ((k^{b^3 c}) (k^{bc})^{y_g} (T_0)^{\phi_2} (k^{ac})^{y_g \phi_2})^{-1},
             \displaybreak[0] \\
    C_{2,1} &= (h \prod_{i=1}^n u_i^{I_{\gamma,i}^*})^t
             = (k^{b^2 B + y_h} \prod_{i=1}^n k^{(b^2 A_i + y_{u_i}) I_{\gamma,i}^*})^c
             = (k^{b^2 c})^{\Gamma(ID_{\gamma}^*)} (k^c)^{\Delta(ID_{\gamma}^*)},~ \\
    C_{2,2} &= (h^{\nu} \prod_{i=1}^n (u_i^{\nu})^{I_{\gamma,i}^*})^t
             = (k^{(b^2 B + y_h)a} \prod_{i=1}^n k^{(b^2 A_i + y_{u_i})a I_{\gamma,i}^*})^c
             = (T_0)^{\Gamma(ID_{\gamma}^*)} (k^{ac})^{\Delta(ID_{\gamma}^*)},~
             \displaybreak[0] \\
    C_{2,3} &= (h^{-\tau} \prod_{i=1}^n (u_i^{-\tau})^{I_{\gamma,i}^*})^t
             = ((k^{(b^2 B + y_h)(b + a \phi_2)} \prod_{i=1}^n
               k^{(b^2 A_i + y_{u_i})(b + a \phi_2) I_{\gamma,i}^*})^c)^{-1} \\
            &= ((k^{b^3 c})^{\Gamma(ID_{\gamma}^*)} (k^{bc})^{\Delta(ID_{\gamma}^*)}
               (T_0)^{\phi_2 \Gamma(ID_{\gamma}^*)}
               (k^{ac})^{\phi_2 \Delta(ID_{\gamma}^*)})^{-1}.
    \end{align*}
We next show that the distribution of the simulation using $D, T = T_1 =
k^{ab^2c + d}$ is the same as $\textbf{G}_1$. We only consider the
distribution of the challenge ciphertext since $T$ is only used in the
challenge ciphertext. The only difference between $T_0$ and $T_1$ is that
$T_1$ additionally has $k^d$. Thus $C_{1,2}, C_{1,3}, C_{2,2}, C_{2,3}$
components that have $T$ in the simulation additionally have
    $k^d, (k^d)^{-\phi_2},
    (k^d)^{\Gamma(ID_{\gamma}^*)},
    (k^d)^{-\phi_2 \Gamma(ID_{\gamma}^*)}$
respectively. If we implicitly set $s_c = d, z_c = \Gamma(ID_{\gamma}^*)$,
then the challenge ciphertext is semi-functional. The distribution of this
semi-functional challenge ciphertext is the same as $\textbf{G}_1$ since $B,
\{ A_i \}$ for $z_c$ are information theoretically hidden to $\mcA$.
We obtain $\Pr [\mcB_1(D,T_0) = 0] - 1/2 = \Adv_{\mcA}^{G_0}$ and $\Pr
[\mcB_1(D,T_1) = 0] - 1/2 = \Adv_{\mcA}^{G_1}$ from the above analysis. Thus,
we can easily derive the advantage of $\mcB_1$ as
    \begin{align*}
    \Adv_{\mcB_1}^{A1}(\lambda)
     =  \big| \Pr[\mcB_1(D, T_0) = 0] - \Pr[\mcB_1(D, T_1) = 0] \big|
     =  \big| \Adv_{\mcA}^{G_0} - \Adv_{\mcA}^{G_1} \big|.
    \end{align*}
This completes our proof.

\subsection{Proof of Lemma~\ref{lem:ahibe-prime-2}
(Indistinguishability of $\textbf{G}_{1,k-1}$ and $\textbf{G}'_{1,k}$)}


In this proof, the challenge ciphertext is semi-functional and the $k$-th
private key should be normal or semi-functional type-1 depending on the $T$
value of the given assumption. However, the paradox of dual system encryption
occurs in this proof since a simulator can create a semi-functional
ciphertext to check the type of the $k$-th private key by decrypting the
semi-functional ciphertext using the $k$-the private key. The main idea to
solve this paradox is to use a nominally semi-functional type-1 private key.
If the $k$-th private key is nominally semi-functional type-1, then $z_{k,1}$
of the nominally semi-functional private key is the same as the $z_c$ of a
semi-functional challenge ciphertext. Thus the simulator cannot distinguish
the type of $k$-th private key since the decryption of the semi-functional
ciphertext using the $k$-th private key always succeeds.

Before proving this lemma, we introduce Assumption 2-A as follows: Let $(p,
\G, \hat{\G}, \G_T, e)$ be a description of the asymmetric bilinear group of
prime order $p$. Let $g, \hat{g}$ be generators of $\G, \hat{\G}$
respectively. Assumption 2-A is that if the challenge values
    $D = ( (p, \G, \hat{\G}, \G_T, e),
    k, k^a, k^{a^2}, k^{bx}, k^{abx}, k^{a^2x}, \hat{k}, \hat{k}^a,
    \hat{k}^b, \hat{k}^{y_1}, \hat{k}^{y_2})$
and $T = (D_1, D_2)$ are given, no PPT algorithm can distinguish $T =
(\hat{k}^{b y_1}, \hat{k}^{b y_2})$ from $T = (\hat{k}^{d_1}, \hat{k}^{d_2})$
with more than a negligible advantage. It is easy to show that if there
exists an adversary that breaks Assumption 2-A, then an algorithm can break
Assumption 2 with the same probability by setting
    $\hat{k}^{y_1} = (\hat{k}^{b})^{r_1} \hat{k}^{s_1},
     \hat{k}^{y_2} = (\hat{k}^{b})^{r_2} \hat{k}^{s_2},
     D_1 = (T)^{r_1} (\hat{k}^c)^{s_1},
     D_2 = (T)^{r_1} (\hat{k}^c)^{s_1}$
where $\hat{k}^b, \hat{k}^c, T$ are given in Assumption 2 and $r_1, r_2, s_1,
s_2$ are random exponents in $\Z_p$. The simulated values are correctly
distributed since there exists one-to-one correspondence between $\{r_1, s_1,
r_2, s_2\}$ and $\{y_1, y_2, d_1, d_2\}$.

\vs \textbf{Simulator}. Suppose there exists an adversary $\mcA$ that
distinguishes between $\textbf{G}_{1,k-1}$ and $\textbf{G}'_{1,k}$ with a
non-negligible advantage. A simulator $\mcB_2$ that breaks Assumption 2-A
using $\mcA$ is given: a challenge tuple
    $D = ((p, \G, \hat{\G}, \G_T, e),
    k, k^a, k^{a^2}, k^{bx}, k^{abx}, k^{a^2x},
    \hat{k}, \hat{k}^a, \hat{k}^b, \hat{k}^{y_1}, \hat{k}^{y_2})$
    and $T = (D_1, D_2)$ where
    $T = T_0 = (D_1^0, D_2^0) = (\hat{k}^{b y_1}, \hat{k}^{b y_2})$ or
    $T = T_1 = (D_1^1, D_2^1) = (\hat{k}^{b y_1 + d_1}, \hat{k}^{b y_2 + d_2})$.
Then $\mcB_2$ that interacts with $\mcA$ is described as follows:
$\mcB_2$ first chooses random exponents $\nu, y_{\tau}, B, \{ A_i \}_{i=1}^l,
\alpha \in \Z_p$ and random blinding values $y_h, \{ y_{u_i} \}_{i=1}^l, y_w
\in \Z_p$. It implicitly sets
    $\phi_1 = -\nu b + (a + y_{\tau}), \phi_2 = b, \tau = a + y_{\tau}$
and creates the public parameters as
    \begin{align*}
    &   g = k^a,~
        g^{\nu} = (k^a)^{\nu},~
        g^{-\tau} = (k^{a^2} (k^a)^{y_{\tau}})^{-1},~ \\
    &   h = (k^a)^B k^{y_h},~
        h^{\nu} = (k^a)^{B \nu} k^{y_h \nu},~
        h^{-\tau} = ((k^{a^2})^B (k^a)^{y_h + B y_{\tau}} k^{y_h y_{\tau}})^{-1},~ \db \\
    &   \big\{
        u_i = (k^a)^{A_i} k^{y_{u_i}},~
        u_i^{\nu} = (k^a)^{A_i \nu} k^{y_{u_i} \nu},~
        u_i^{-\tau} = ((k^{a^2})^{A_i} (k^a)^{y_{u_i} + A_i y_{\tau}}
            k^{y_{u_i} y_{\tau}})^{-1}
        \big\}_{i=1}^l,~ \\
    &   \hat{w}^{\phi_1} = ((\hat{k}^b)^{-\nu} \hat{k}^a \hat{k}^{y_{\tau}})^{y_w},~
        \hat{w}^{\phi_2} = (\hat{k}^b)^{y_w},~
        \hat{w} = \hat{k}^{y_w},~
        \Omega = e(k^a, \hat{k}^a)^{\alpha}.
    \end{align*}
It also sets
    $\hat{g} = \hat{k}^a, \hat{g}^{\alpha} = (\hat{k}^a)^{\alpha},
    \hat{h} = (\hat{k}^a)^B \hat{k}^{y_h},
    \{ \hat{u}_i = (\hat{k}^a)^{A_i} \hat{k}^{y_{u_i}} \}_{i=1}^l$
for the master key. Additionally, it sets $f = k, \hat{f} = \hat{k}$ for the
semi-functional ciphertext and private key.
Let $\Delta(ID) = y_h + \sum_{i=1}^m y_{u_i} I_i$ and $\Gamma(ID) = B +
\sum_{i=1}^m A_i I_i$ where $ID = (I_1, \ldots, I_m)$. $\mcA$ adaptively
requests a private key for $ID = (I_1, \ldots, I_m)$. If this is a $j$-th
private key query, then $\mcB_2$ handles this query as follows:
\begin{itemize}
\item \textbf{Case $j < k$} : It creates a semi-functional private key by
    calling \textbf{KeyGenSF-2} since it knows the master key and the
    tuple $(\hat{f}^{-\nu}, \hat{f}, 1)$ for the semi-functional private
    key.

\item \textbf{Case $j = k$} : It first selects random exponents $r'_1,
    c'_1, c'_2, \{ c'_{3,i} \}_{i=m+1}^l \in \Z_p$. It implicitly sets
    $r_1 = -y_1 + r'_1,~ c_1 = y_1 \Gamma(ID) / y_w + c'_1,~ c_2 = y_1 /
    y_w + c'_2,~ \{ c_{3,i} = y_1 A_i / y_w + c'_{3,i} \}_{i=m+1}^l$ and
    creates the decryption and delegation components of a private key as
    \begin{align*}
    &   K_{1,1} = \hat{g}^{\alpha} (\hat{k}^{y_1})^{-\Delta(ID)}
            (\hat{h} \prod_{i=1}^m \hat{u}_i^{I_i})^{r'_1} (D_1)^{-\nu \Gamma(ID)}
            (\hat{k}^{y_1})^{y_{\tau} \Gamma(ID)} (\hat{w}^{\phi_1})^{c'_1}, \\
    &   K_{1,2} = (D_1)^{\Gamma(ID)} (\hat{w}^{\phi_2})^{c'_1},~
        K_{1,3} = (\hat{k}^{y_1})^{\Gamma(ID)} \hat{w}^{c'_1},~
        \displaybreak[0] \\
    &   K_{2,1} = \hat{g}^{r'_1} (D_1)^{-\nu} (\hat{k}^{y_1})^{y_{\tau}}
            (\hat{w}^{\phi_1})^{c'_2},
        K_{2,2} = D_1 (\hat{w}^{\phi_2})^{c'_2},~
        K_{2,3} = \hat{k}^{y_1} \hat{w}^{c'_2},~ \\
    &   \big\{
        L_{3,i,1} = (\hat{k}^{y_1})^{-y_{u_i}} \hat{u}_i^{r'_1}
            (D_1)^{-\nu A_i} (\hat{k}^{y_1})^{y_{\tau} A_i}
            (\hat{w}^{\phi_1})^{c'_{3,i}},~
        L_{3,i,2} = (D_1)^{A_i} (\hat{w}^{\phi_2})^{c'_{3,i}},~ \\
    &~~ L_{3,i,3} = (\hat{k}^{y_1})^{A_i} \hat{w}^{c'_{3,i}}
        \big\}_{i=m+1}^l.
    \end{align*}
    It also creates the randomization components of a private key
    similarly by selecting random exponents $r'_2, c'_4, c'_5, \{
    c'_{6,i} \}_{i=m+1}^l \in \Z_p$ except that it uses $\hat{k}^{y_2},
    D_2$ instead of $\hat{k}^{y_1}, D_1$. We omit the detailed
    description of these.

\item \textbf{Case $j > k$} : It creates a normal private key by calling
    \textbf{KeyGen} since it knows the master key.
\end{itemize}
In the challenge step, $\mcA$ submits two challenge hierarchical identities
$ID_0^* = (I_{0,1}^*, \ldots, I_{0,n}^*), ID_1^* = (I_{1,1}^*, \ldots,
I_{1,n}^*)$ and two messages $M_0^*, M_1^*$. $\mcB_2$ flips a random coin
$\gamma \in \{0,1\}$ internally and chooses a random exponent $t' \in \Z_p$.
It implicitly sets
    $t = bx + t',~ s_c = -a^2 x,~ z_c = \Gamma(ID_{\gamma}^*)$
and creates a semi-functional ciphertext as
    \begin{align*}
    C ~~~~  &= e(k^{abx}, \hat{k}^a)^{\alpha} \cdot e(k^a, \hat{k}^a)^{\alpha t'}
               \cdot M_{\gamma}^*,~ \\
    C_{1,1} &= k^{abx} g^{t'},~
    C_{1,2}  = (k^{abx})^{\nu} (g^{\nu})^{t'} (k^{a^2 x})^{-1},~
    C_{1,3}  = (k^{abx})^{-y_{\tau}} (g^{-\tau})^{t'}, \\
    C_{2,1} &= (k^{abx})^{\Gamma(ID_{\gamma}^*)}
               (k^{bx})^{\Delta(ID_{\gamma}^*)}
               (h \prod_{i=1}^n u_i^{I_{\gamma,i}^*})^{t'},~ \\
    C_{2,2} &= (k^{abx})^{\Gamma(ID_{\gamma}^*) \nu}
               (k^{bx})^{\Delta(ID_{\gamma}^*) \nu}
               (h^{\nu} \prod_{i=1}^n (u_i^{\nu})^{I_{\gamma,i}^*})^{t'}
               (k^{a^2 x})^{-\Gamma(ID_{\gamma}^*)},~ \\
    C_{2,3} &= (k^{abx})^{-\Gamma(ID_{\gamma}^*) y_{\tau}}
               (k^{abx})^{-\Delta(ID_{\gamma}^*)}
               (k^{bx})^{-\Delta(ID_{\gamma}^*) y_{\tau}}
               (h^{-\tau} \prod_{i=1}^n (u_i^{-\tau})^{I_{\gamma,i}^*})^{t'}.
    \end{align*}
Finally, $\mcA$ outputs a guess $\gamma'$. If $\gamma = \gamma'$, $\mcB_2$
outputs 0. Otherwise, it outputs 1.

\vs \textbf{Analysis}. We first show that the distribution of the simulation
using $D, T_0 = (D_1^0, D_2^0) = (\hat{k}^{b y_1}, \hat{k}^{b y_2})$ is the
same as $\textbf{G}_{1,k-1}$. The public parameters are correctly distributed
since the random blinding values $y_h, \{ y_{u_i} \}, y_w$ are used. The
$k$-th private key is correctly distributed as
    \begin{align*}
    K_{1,1} &= \hat{g}^{\alpha} (\hat{h} \prod_{i=1}^m \hat{u}_i^{I_i})^{r_1}
               (\hat{w}^{\phi_1})^{c_1}
             = \hat{g}^{\alpha} (\hat{k}^{a B + y_h} \prod_{i=1}^m
               \hat{k}^{(a A_i + y_{u_i}) I_i})^{-y_1 + r'_1}
               (\hat{k}^{y_w (-\nu b + a + y_{\tau})})^{y_1 \Gamma(ID) / y_w + c'_1} \\
            &= \hat{g}^{\alpha} (\hat{k}^{y_1})^{-\Delta(ID)}
               (\hat{h} \prod_{i=1}^m \hat{u}_i^{I_i})^{r'_1}
               (D_1^0)^{-\nu \Gamma(ID)}
               (\hat{k}^{y_1})^{y_{\tau} \Gamma(ID)} (\hat{w}^{\phi_1})^{c'_1},
               \displaybreak[0] \\
    K_{2,1} &= \hat{g}^{r_1} (\hat{w}^{\phi_1})^{c_2}
             = (\hat{k}^a)^{-y_1 + r'_1} (\hat{k}^{y_w (-\nu b + a + y_{\tau})}
                )^{y_1 / y_w + c'_2}
             = \hat{g}^{r'_1} (D_1^0)^{-\nu} (\hat{k}^{y_1})^{y_{\tau}}
               (\hat{w}^{\phi_1})^{c'_2}, \\
    L_{3,i,1} &= \hat{u}_i^{r_1} (\hat{w}^{\phi_1})^{c_{3,i}}
             = (\hat{k}^{a A_i + y_{u_i}})^{-y_1 + r'_1}
               (\hat{k}^{y_w (-\nu b + a + y_{\tau})})^{y_1 A_i / y_w + c'_{3,i}} \\
            &= (\hat{k}^{y_1})^{-y_{u_i}} \hat{u}_i^{r'_1}
               (D_1^0)^{-\nu A_i} (\hat{k}^{y_1})^{y_{\tau} A_i}
               (\hat{w}^{\phi_1})^{c'_{3,i}}.
    \end{align*}
The semi-functional challenge ciphertext is correctly distributed as
    \begin{align*}
    C_{1,1} &= g^t = (k^a)^{bx + t'} = k^{abx} g^{t'}, \\
    C_{1,2} &= (g^{\nu})^t f^{s_c} = (k^{a \nu})^{bx + t'} k^{-a^2 x}
             = (k^{abx})^{\nu} (g^{\nu})^{t'} (k^{a^2 x})^{-1}, \\
    C_{1,3} &= (g^{-\tau})^t (f^{-\phi_2})^{s_c}
             = (k^{a(-a - y_{\tau})})^{bx + t'} k^{-b (-a^2 x)}
             = (k^{abx})^{-y_{\tau}} (g^{-\tau})^{t'}, \displaybreak[0] \\
    C_{2,1} &= (h \prod_{i=1}^n u_i^{I_{\gamma,i}^*})^t
             = (k^{aB + y_h} \prod_{i=1}^n (k^{a A_i + y_{u_i}})^{I_{\gamma,i}^*})^{bx + t'}
             = (k^{abx})^{\Gamma(ID_{\gamma}^*)} (k^{bx})^{\Delta(ID_{\gamma}^*)}
               (h \prod_{i=1}^n u_i^{I_{\gamma,i}^*})^{t'},
               \displaybreak[0] \\
    C_{2,2} &= (h^{\nu} \prod_{i=1}^n (u_i^{\nu})^{I_{\gamma,i}^*})^t (f^{s_c})^{z_c}
             = (k^{(a B + y_h) \nu} \prod_{i=1}^n (k^{(a A_i + y_{u_i})
                    \nu})^{I_{\gamma,i}^*})^{bx + t'}
               k^{-a^2 x \Gamma(ID_{\gamma}^*)} \\
            &= (k^{abx})^{\Gamma(ID_{\gamma}^*) \nu} (k^{bx})^{\Delta(ID_{\gamma}^*) \nu}
               (h^{\nu} \prod_{i=1}^n (u_i^{\nu})^{I_{\gamma,i}^*})^{t'}
               (k^{a^2 x})^{-\Gamma(ID_{\gamma}^*)},
               \displaybreak[0] \\
    C_{2,3} &= (h^{-\tau} \prod_{i=1}^n (u_i^{-\tau})^{I_{\gamma,i}^*})^t
                    (f^{-\phi_2})^{s_c z_c} \\
            &= (k^{(a B + y_h) (-a - y_{\tau})} \prod_{i=1}^n
               (k^{(a A_i + y_{u_i}) (-a - y_{\tau})})^{I_{\gamma,i}^*})^{bx + t'}
               k^{-b (-a^2 x) \Gamma(ID_{\gamma}^*)} \\
            &= (k^{abx})^{-\Gamma(ID_{\gamma}^*) y_{\tau}}
               (k^{abx})^{-\Delta(ID_{\gamma}^*)}
               (k^{bx})^{-\Delta(ID_{\gamma}^*) y_{\tau}}
               (h^{-\tau} \prod_{i=1}^n (u_i^{-\tau})^{I_{\gamma,i}^*})^{t'}.
    \end{align*}
Note that it can create the semi-functional ciphertext with only fixed $z_c =
\Gamma(ID_{\gamma}^*)$ since $s_c, z_c$ enable the cancellation of $k^{a^2
bx}$. Even though the simulator uses the fixed $z_c$, the distribution of
$z_c$ is correct since $B, \{ A_i \}$ for $z_c$ are information theoretically
hidden to $\mcA$.
We next show that the distribution of the simulation using $D, T_1 = (D_1^1,
D_2^1) = (\hat{k}^{b y_1 + d_1}, \hat{k}^{b y_2 + d_2})$ is the same as
$\textbf{G}'_{1,k}$ except the $k$-th private key is nominally
semi-functional. We only consider the distribution of the $k$-th private key
since $T = (D_1, D_2)$ is only used in the $k$-th private key. The only
difference between $T_0 = (D_1^0, D_2^0)$ and $T_1 = (D_1^1, D_2^1)$ is that
$T_1 = (D_1^1, D_2^1)$ additionally has $(\hat{k}^{d_1}, \hat{k}^{d_2})$. The
decryption and delegation components $K_{1,1}, K_{1,2}, K_{2,1}, K_{2,2}, \{
L_{3,i,1}, L_{3,i,2} \}$ that have $D_1$ in the simulation additionally have
    $(\hat{k}^{d_1})^{-\nu \Gamma(ID)}, (\hat{k}^{d_1})^{\Gamma(ID)}$,
    $(\hat{k}^{d_1})^{-\nu}, \hat{k}^{d_1},
    \{ (\hat{k}^{d_1})^{-\nu A_i}, (\hat{k}^{d_1})^{A_i} \}$
respectively. The randomization components $R_{1,1}, R_{1,2}, R_{2,1},
R_{2,2}, \{ R_{3,i,1}, R_{3,i,2} \}$ that have $D_2$ in the simulation also
have the additional values except that $\hat{k}^{d_2}$ is used instead of
$\hat{k}^{d_1}$. If we implicitly set
    $s_{k,1} = d_1, z_{k,1} = \Gamma(ID),
     \{ z_{k,2,i} = A_i \}_{i=m+1}^l, s_{k,2} = d_2$,
then the distribution of the $k$-th private key is the same as
$\textbf{G}'_{1,k}$ except that the $k$-the private key is nominally
semi-functional type-1.

Finally, we show that the adversary cannot distinguish the nominally
semi-functional type-1 private key from the semi-functional type-1 private
key. The main idea of this proof is that the adversary cannot request a
private key for $ID$ that is a prefix of a challenge identity $ID^*$ in the
security model. Suppose there exists an unbounded adversary, then the
adversary can gather the values $z_{k,1} = \Gamma(ID) = B + \sum_{i=1}^m A_i
I_i, \{ z_{k,2,i} = A_i \}_{i=m+1}^l$ from the $k$-the private key query for
$ID = (I_1, \ldots, I_m)$ and $z_c = \Gamma(ID_{\gamma}^*) = B + \sum_{i=1}^n
A_i I_{\gamma,i}^*$ from the challenge ciphertext for $ID_{\gamma}^* =
(I_{\gamma,1}^*, \ldots, I_{\gamma,n}^*)$. In case of $n \geq m$, the values
that are revealed to the adversary are described as
    \begin{align*}
    \begin{pmatrix}
        1 & I_{\gamma,1}^* ~\cdots~ I_{\gamma,m}^* & I_{\gamma,m+1}^* & \cdots & 0 \\
        1 & I_1           ~~\cdots~~ I_m           & 0 & \cdots & 0 \\
        0 & 0             ~~\cdots~~ 0             & 1 & \cdots & 0 \\
        \vdots & \vdots   ~~\ddots~~ \vdots        & \vdots & \ddots & \vdots \\
        0 & 0             ~~\cdots~~ 0             & 0 & \cdots & 1 \\
    \end{pmatrix}
    \begin{pmatrix}
        B \\ A_1 \\ \vdots \\ A_m \\
        A_{m+1} \\ \vdots \\ A_l \\
    \end{pmatrix}
    =
    \begin{pmatrix}
        z_c \\ z_{k,1} \\
        z_{k,2,m+1} \\ \vdots \\ z_{k,2,l} \\
    \end{pmatrix}.
    \end{align*}
It is easy to show that the row rank of the above $(l-m+2) \times (l+1)$
matrix is $l-m+2$ since there exists an index $j$ such that $I_j \neq
I_{\gamma,j}^*$. It means that the above matrix is non-singular. In case of
$n < m$, the revealed values to the adversary also can be described as a
similar matrix equation as the above one. The row rank of this $(l-m+2)
\times (l+1)$ matrix is $l-m+2$ since $I_m \neq 0$. Therefore these values
look random to the unbounded adversary since the matrixes for two cases are
non-singular and $B, A_1, \ldots, A_l$ are chosen randomly.
We obtain $\Pr [\mcB_2(D,T_0) = 0] - 1/2 = \Adv_{\mcA}^{G_{1,k-1}}$ and $\Pr
[\mcB_2(D,T_1) = 0] - 1/2 = \Adv_{\mcA}^{G'_{1,k}}$ from the above analysis.
Thus, we can easily derive the advantage of $\mcB_2$ as
    \begin{align*}
    \Adv_{\mcB_2}^{A2}(\lambda)
     =  \big| \Pr[\mcB_2(D, T_0) = 0] - \Pr[\mcB_2(D, T_1) = 0] \big|
     =  \big| \Adv_{\mcA}^{G_{1,k-1}} - \Adv_{\mcA}^{G'_{1,k}} \big|.
    \end{align*}
This completes our proof.

\subsection{Proof of Lemma~\ref{lem:ahibe-prime-3}
(Indistinguishability of $\textbf{G}'_{1,k}$ and $\textbf{G}_{1,k}$)}


In this proof, the challenge ciphertext is semi-functional and the $k$-th
private key should be semi-functional type-1 or semi-functional type-2
depending on the $T$ value of the given assumption. The main idea of this
proof is to show that the semi-functional type-1 and semi-functional type-2
private keys are computationally indistinguishable using the given
assumption.

Before proving this lemma, we introduce Assumption 3-A as follows: Let $(p,
\G, \hat{\G}, \G_T, e)$ be a description of the asymmetric bilinear group of
prime order $p$. Let $g, \hat{g}$ be generators of $\G, \hat{\G}$
respectively. Assumption 3-A is that if the challenge values
    $D = ( (p, \G, \hat{\G}, \G_T, e),
    k, \hat{k}, \hat{k}^{x_1}, \hat{k}^{x_{2,1}}, \ldots, \hat{k}^{x_{2,l}},
    \hat{k}^y) \mbox{ and } T = (D_1, D_{2,1}, \ldots, D_{2,l})$
are given, no PPT algorithm can distinguish
    $T = T_0 = (\hat{k}^{x_1 y}, \hat{k}^{x_{2,1} y}, \ldots, \hat{k}^{x_{2,l} y})$
    from
    $T = T_1 = (\hat{k}^{d_1}, \hat{k}^{d_{2,1}}, \ldots, \hat{k}^{d_{2,l}})$
with more than a negligible advantage. It is easy to show that if there
exists an adversary that breaks Assumption 3-A, then an algorithm can break
Assumption 3 with the same probability by setting
    $\hat{k}^{x_1} = (\hat{k}^a)^{r_1} \hat{k}^{s_1},
     \{ \hat{k}^{x_{2,i}} = (\hat{k}^a)^{r_{2,i}} \hat{k}^{s_{2,i}} \}_{i=1}^{l},
     \hat{k}^{y} = \hat{k}^b,
     D_1 = (T)^{r_1} (\hat{k}^b)^{s_1},
     \{ D_{2,i} = (T)^{r_{2,i}} (\hat{k}^b)^{s_{2,i}} \}_{i=1}^{l}$
where $\hat{k}^a, \hat{k}^b, T$ are given in Assumption 3 and $r_1, s_1, \{
r_{2,i}, s_{2,i} \}_{i=1}^{l}$ are random exponents in $\Z_p$. The simulated
values are correctly distributed since there exists one-to-one correspondence
between $\{r_1, s_1, \{ r_{2,i} \}, \{ s_{2,i} \}\}$ and $\{x_1, \{ x_{2,i}
\}, d_1, \{ d_{2,i} \}\}$.

\vs \textbf{Simulator}. Suppose there exists an adversary $\mcA$ that
distinguishes between $\textbf{G}'_{1,k}$ and $\textbf{G}_{1,k}$ with a
non-negligible advantage. A simulator $\mcB_3$ that breaks Assumption 3-A
using $\mcA$ is given: a challenge tuple
    $D = ((p, \G, \hat{\G}, \G_T, e),
    k, \hat{k}, \hat{k}^{x_1}, \hat{k}^{x_{2,1}}, \ldots, \hat{k}^{x_{2,l}},
    \hat{k}^y)$
and $T = (D_1, \ldots, D_{2,l})$ where
    $T = T_0 = (D_1^0, \ldots, D_{2,l}^0)
       = (\hat{k}^{x_1 y}, \hat{k}^{x_{2,1} y}, \ldots, \hat{k}^{x_{2,l} y})$ or
    $T = T_1 = (D_1^1, \ldots, D_{2,l}^1)
       = (\hat{k}^{d_1}, \hat{k}^{d_{2,1}}, \ldots, \hat{k}^{d_{2,l}})$.
Then $\mcB_3$ that interacts with $\mcA$ is described as follows:
$\mcB_3$ first chooses random exponents $\nu, \phi_1, \phi_2, \alpha \in
\Z_p$ and random blinding values $y_g, y_h, \{ y_{u_i} \}_{i=1}^l, y_w \in
\Z_p$. It implicitly sets $\tau = \phi_1 + \nu \phi_2$ and sets
    $g = k^{y_g}, \hat{g} = \hat{k}^{y_g}, h = k^{y_h}, \hat{h} = \hat{k}^{y_h},
     \{ u_i = k^{y_{u_i}}, \hat{u}_i = \hat{k}^{y_{u_i}} \}_{i=1}^l,
     \hat{w} = \hat{k}^{y_w}$.
It creates the public parameters as
    \begin{align*}
    PP = \big(
        g, g^{\nu}, g^{-\tau},~ h, h^{\nu}, h^{-\tau},~
        \{ u_i, u_i^{\nu}, u_i^{-\tau} \}_{i=1}^l,~
        \hat{w}^{\phi_1}, \hat{w}^{\phi_2}, \hat{w},
        \Omega = e(g, \hat{g})^{\alpha} \big)
    \end{align*}
and the master key as $MK = (\hat{g}, \hat{g}^{\alpha}, \hat{h}, \{ \hat{u}_i
\}_{i=1}^l)$. Additionally, it sets $f = k, \hat{f} = \hat{k}$ for the
semi-functional ciphertext and private key.
Let $\Delta(ID) = y_h + \sum_{i=1}^m y_{u_i} I_i$ where $ID = (I_1, \ldots,
I_m)$. $\mcA$ adaptively requests a private key for $ID = (I_1, \ldots,
I_m)$. If this is a $j$-th private key query, then $\mcB_3$ handles this
query as follows:
\begin{itemize}
\item \textbf{Case $j < k$} : It creates a semi-functional private key by
    calling \textbf{KeyGenSF-2} since it knows the master key and the
    tuple $(\hat{f}^{-\nu}, \hat{f}, 1)$ for the semi-functional private
    key.

\item \textbf{Case $j = k$} : It first selects random exponents $r_1,
    c_1, c_2, \{ c_{3,i} \}_{i=m+1}^l, s_{k,1} \in \Z_p$. It implicitly
    sets $z_{k,1} = x_1,~ \{ z_{k,2,i} = x_{2,i} \}_{i=m+1}^l$ and
    creates the decryption and delegation components of a private key as
    \begin{align*}
    &   K_{1,1} = \hat{g}^{\alpha} (\hat{h} \prod_{i=1}^m \hat{u}_i^{I_i})^{r_1}
            (\hat{w}^{\phi_1})^{c_1} (\hat{k}^{x_1})^{-\nu s_{k,1}},~
        K_{1,2} = (\hat{w}^{\phi_2})^{c_1} (\hat{k}^{x_1})^{s_{k,1}},~
        K_{1,3} = \hat{w}^{c_1},~ \\
    &   K_{2,1} = \hat{g}^{r_1} (\hat{w}^{\phi_1})^{c_2} \hat{k}^{-\nu s_{k,1}},~
        K_{2,2} = (\hat{w}^{\phi_2})^{c_2} \hat{k}^{s_{k,1}},~
        K_{2,3} = \hat{w}^{c_2},~ \\
    &   \big\{
        L_{3,i,1} = \hat{u}_i^{r_1} (\hat{w}^{\phi_1})^{c_{3,i}}
                    (\hat{k}^{x_{2,i}})^{-\nu s_{k,1}},~
        L_{3,i,2} = (\hat{w}^{\phi_2})^{c_{3,i}} (\hat{k}^{x_{2,i}})^{s_{k,1}},~
        L_{3,i,3} = \hat{w}^{c_{3,i}}
        \big\}_{i=m+1}^l.
    \end{align*}
    Next, it selects random exponents $r_2, c_4, c_5, \{ c_{6,i}
    \}_{i=m+1}^l \in \Z_p$. It implicitly sets $s_{k,2} = y$ and creates
    the randomization components of a private key as
    \begin{align*}
    &   R_{1,1} = (\hat{h} \prod_{i=1}^m \hat{u}_i^{I_i})^{r_2}
                  (\hat{w}^{\phi_1})^{c_4} (D_1)^{-\nu},~
        R_{1,2} = (\hat{w}^{\phi_2})^{c_4} D_1,~
        R_{1,3} = \hat{w}^{c_4},~ \\
    &   R_{2,1} = \hat{g}^{r_2} (\hat{w}^{\phi_1})^{c_5} (\hat{k}^y)^{-\nu},~
        R_{2,2} = (\hat{w}^{\phi_2})^{c_5} \hat{k}^y,~
        R_{2,3} = \hat{w}^{c_5},~ \\
    &   \big\{
        R_{3,i,1} = \hat{u}_i^{r_2} (\hat{w}^{\phi_1})^{c_{6,i}} (D_{2,i})^{-\nu},~
        R_{3,i,2} = (\hat{w}^{\phi_2})^{c_{6,i}} D_{2,i},~
        R_{3,i,3} = \hat{w}^{c_{6,i}}
        \big\}_{i=m+1}^l.
    \end{align*}

\item \textbf{Case $j > k$} : It creates a normal private key by calling
    \textbf{KeyGen} since it knows the master key.
\end{itemize}
In the challenge step, $\mcA$ submits two challenge hierarchical identities
$ID_0^* = (I_{0,1}^*, \ldots, I_{0,n}^*), ID_1^* = (I_{1,1}^*, \ldots,
I_{1,n}^*)$ and two messages $M_0^*, M_1^*$. $\mcB_3$ flips a random coin
$\gamma \in \{0,1\}$ internally. It creates a semi-functional challenge
ciphertext by calling $\textbf{EncryptSF}$ on the message $M_{\gamma}$ and
the hierarchical identity $ID_{\gamma}^*$ since it knows the tuple $(1, f,
f^{-\phi_2})$ for the semi-functional ciphertext.
Finally, $\mcA$ outputs a guess $\gamma'$. If $\gamma = \gamma'$, $\mcB_3$
outputs 0. Otherwise, it outputs 1.

\vs \textbf{Analysis}. We first show that the distribution of the simulation
using $D, T_0 = (D_1^0, \ldots, D_{2,l}^0)$ is the same as
$\textbf{G}'_{1,k}$. It is easy to check that the private key components are
correctly distributed except the randomization components of the $k$-th
private key. If we implicitly set
    $z_{k,1} = x_1, \{ z_{k,2,i} = x_{2,i} \}_{i=m+1}^l, s_{k,2} = y$,
then the randomization components of the $k$-th private key have the same
distribution as $\textbf{G}'_{1,k}$.
We next show that the distribution of the simulation using $D, T_1 = (D_1^1,
\ldots, D_{2,l}^1)$ is the same as $\textbf{G}_{1,k}$. We only consider the
distribution of the randomization components of the $k$-th private key since
$T$ is only used in the randomization components of the $k$-th private key.
If we implicitly set
    $s_{k,2} = y,~ z_{k,3} = d_1 / y,~ \{ z_{k,4,i} = d_{2,i} / y \}_{i=m+1}^l$,
then the randomization components are correctly distributed as
    \begin{align*}
    R_{1,1} &= (\hat{h} \prod_{i=1}^m \hat{u}_i^{I_i})^{r_2}
               (\hat{w}^{\phi_1})^{c_4} (\hat{f}^{-\nu})^{s_{k,2} z_{k,3}}
             = (\hat{h} \prod_{i=1}^m \hat{u}_i^{I_i})^{r_2}
               (\hat{w}^{\phi_1})^{c_4} (\hat{k}^{-\nu})^{y \cdot d_1 / y}
             = (\hat{h} \prod_{i=1}^m \hat{u}_i^{I_i})^{r_2}
               (\hat{w}^{\phi_1})^{c_4} (D_1^1)^{-\nu}, \\
    R_{3,i,1} &= \hat{u}_i^{r_2} (\hat{w}^{\phi_1})^{c_{6,i}}
               (\hat{f}^{-\nu})^{s_{k,2} z_{k,4,i}}
             = \hat{u}_i^{r_2} (\hat{w}^{\phi_1})^{c_{6,i}}
               (\hat{k}^{-\nu})^{y \cdot d_{2,i} / y}
             = \hat{u}_i^{r_2} (\hat{w}^{\phi_1})^{c_{6,i}} (D_{2,i}^1)^{-\nu}.
    \end{align*}
From the above analysis, we can obtain $\Pr [\mcB_3(D,T_0) = 0] - 1/2 =
\Adv_{\mcA}^{G'_{1,k}}$ and $\Pr [\mcB_3(D,T_1) = 0] - 1/2 =
\Adv_{\mcA}^{G_{1,k}}$. Thus, we can easily derive the advantage of $\mcB_3$
as
    \begin{align*}
    \Adv_{\mcB_3}^{A3}(\lambda)
     =  \big| \Pr[\mcB_3(D, T_0) = 0] - \Pr[\mcB_3(D, T_1) = 0] \big|
     =  \big| \Adv_{\mcA}^{G'_{1,k}} - \Adv_{\mcA}^{G_{1,k}} \big|.
    \end{align*}
This completes our proof.

\subsection{Proof of Lemma~\ref{lem:ahibe-prime-4}
(Indistinguishability of $\textbf{G}_2$ and $\textbf{G}_3$)}


In this proof, private keys and the challenge ciphertext are semi-functional
type-2 and semi-functional respectively, but a session key should be correct
or random depending on the $T$ value of the given assumption. The main idea
of this proof is to enforce a simulator to solve the Computational
Diffie-Hellman (CDH) problem in order to create the normal types of private
keys and ciphertexts. However, the simulator can generate the semi-functional
types of private keys and ciphertexts since an additional random value in
semi-functional types enables the cancellation of the CDH value.

\vs \textbf{Simulator}. Suppose there exists an adversary $\mcA$ that
distinguishes between $\textbf{G}_2$ and $\textbf{G}_3$ with a non-negligible
advantage. A simulator $\mcB_4$ that breaks Assumption 4 using $\mcA$ is
given: a challenge tuple
    $D = ((p, \G, \hat{\G}, \G_T, e),
    k, k^a, k^b, k^c, \hat{k}, \hat{k}^a, \hat{k}^b, \hat{k}^c)$ and $T$
    where $T = T_0 = e(k, \hat{k})^{abc}$ or $T = T_1 = e(k, \hat{k})^d$.
Then $\mcB_4$ that interacts with $\mcA$ is described as follows:
$\mcB_4$ first chooses random exponents $\phi_1, \phi_2 \in \Z_p$ and random
blinding values $y_g, y_h, \{ y_{u_i} \}_{i=1}^l, y_w \in \Z_p$. It sets
    $g = k^{y_g}, h = k^{y_h}, \{ u_i = k^{y_{u_i}} \}_{i=1}^l,
    \hat{g} = \hat{k}^{y_g}, \hat{h} = \hat{k}^{y_h},
    \{ \hat{u}_i = \hat{k}^{y_{u_i}} \}_{i=1}^l, \hat{w} = \hat{k}^{y_w}$.
It implicitly sets $\nu = a, \tau = \phi_1 + a \phi_2, \alpha = ab$ and
creates the public parameters as
    \begin{align*}
    &   g,~ g^{\nu} = (k^a)^{y_g},~ g^{-\tau} = k^{-y_g \phi_1} (k^a)^{-y_g \phi_2},~
        h,~ h^{\nu} = (k^a)^{y_h},~ h^{-\tau} = k^{-y_h \phi_1} (k^a)^{-y_h \phi_2},~ \\
    &   \big\{
        u_i,~ u_i^{\nu} = (k^a)^{y_{u_i}},~
        u_i^{-\tau} = k^{-y_{u_i} \phi_1} (k^a)^{-y_{u_i} \phi_2}
        \big\}_{i=1}^l,~
        \hat{w}^{\phi_1}, \hat{w}^{\phi_2}, \hat{w},~
        \Omega = e(k^a, \hat{k}^b)^{y_g^2}.
    \end{align*}
Additionally, it sets $f = k, \hat{f} = \hat{k}$ for the semi-functional
ciphertext and private key.
Let $\Delta(ID) = y_h + \sum_{i=1}^m y_{u_i} I_i$ where $ID = (I_1, \ldots,
I_m)$. $\mcA$ adaptively requests a private key for $ID = (I_1, \ldots,
I_m)$. To response the private key query, $\mcB_4$ first selects random
exponents $r_1, c_1, c_2, \{ c_{3,i} \}_{i=m+1}^l, s_{k,1}, z'_{k,1}, \{
z_{k,2,i} \}_{i=m+1}^l \in \Z_p$. It implicitly sets
    $z_{k,1} = b y_g / s_{k,1} + z'_{k,1}$
and creates the decryption and delegation components of a semi-functional
private key as
    \begin{align*}
    &   K_{1,1} = (\hat{h} \prod_{i=1}^m \hat{u}_i^{I_i})^{r_1}
                  (\hat{w}^{\phi_1})^{c_1} (\hat{k}^a)^{-s_{k,1} z'_{k,1}},~
        K_{1,2} = (\hat{w}^{\phi_2})^{c_1} (\hat{k}^b)^{y_g}
                  \hat{k}^{s_{k,1} z'_{k,1}},~
        K_{1,3} = \hat{w}^{c_1}, \\
    &   K_{2,1} = \hat{g}^{r_1} (\hat{w}^{\phi_1})^{c_2} (\hat{k}^a)^{-s_{k,1}},~
        K_{2,2} = (\hat{w}^{\phi_2})^{c_2} \hat{k}^{s_{k,1}},~
        K_{2,3} = \hat{w}^{c_2},~ \\
    &   \big\{
        L_{3,i,1} = \hat{u}_i^{r_1} (\hat{w}^{\phi_1})^{c_{3,i}}
                    (\hat{k}^a)^{-s_{k,1} z_{k,2,i}},~
        L_{3,i,2} = (\hat{w}^{\phi_2})^{c_{3,i}} \hat{k}^{s_{k,1} z_{k,2,i}},~
        L_{3,i,3} = \hat{w}^{c_{3,i}}
        \big\}_{i=m+1}^l.
    \end{align*}
Next, it selects random exponents $r_2, c_4, c_5, \{ c_{6,i} \}_{i=m+1}^l,
s_{k,2}, z_{k,3}, \{ z_{k,4,i} \}_{i=m+1}^l \in \Z_p$ and creates the
randomization components of a semi-functional private key.
In the challenge step, $\mcA$ submits two challenge hierarchical identities
$ID_0^* = (I_{0,1}^*, \ldots, I_{0,n}^*), ID_1^* = (I_{1,1}^*, \ldots,
I_{1,n}^*)$ and two messages $M_0^*, M_1^*$. $\mcB_4$ flips a random coin
$\gamma \in \{0,1\}$ internally and chooses random exponents $s'_c, z'_c \in
\Z_p$. It implicitly sets
    $t = c,~
    s_c = -ac y_g + s'_c,~
    z_c = -ac \Delta(ID_{\gamma}^*) / s_c + z'_c / s_c$
and creates the semi-functional ciphertext as
    \begin{align*}
    &   C       = (T)^{y_g^2} \cdot M_{\gamma}^*,~
        C_{1,1} = (k^c)^{y_g},~
        C_{1,2} = k^{s'_c},~
        C_{1,3} = (k^c)^{-y_g \phi_1} k^{-\phi_2 s'_c}, \\
    &   C_{2,1} = (k^c)^{\Delta(ID_{\gamma}^*)},~
        C_{2,2} = k^{z'_c},~
        C_{2,3} = (k^c)^{-\Delta(ID_{\gamma}^*) \phi_1} k^{-\phi_2 z'_c}.
    \end{align*}
Finally, $\mcA$ outputs a guess $\gamma'$. If $\gamma = \gamma'$, $\mcB_4$
outputs 0. Otherwise, it outputs 1.

\vs \textbf{Analysis}. We first show that the distribution of the simulation
using $D, T_0 = e(k,\hat{k})^{abc}$ is the same as $\textbf{G}_2$. The public
parameters are correctly distributed since the random blinding values $y_g,
y_h, \{ y_{u_i} \}, y_w$ are used. The semi-functional private key is
correctly distributed as
    \begin{align*}
    K_{1,1} &= \hat{g}^{\alpha} (\hat{h} \prod_{i=1}^m \hat{u}_i^{I_i})^{r_1}
               (\hat{w}^{\phi_1})^{c_1} (\hat{f}^{-\nu})^{s_{k,1} z_{k,1}}
             = \hat{k}^{y_g ab} (\hat{h} \prod_{i=1}^m \hat{u}_i^{I_i})^{r_1}
               (\hat{w}^{\phi_1})^{c_1}
               (\hat{k}^{-a})^{s_{k,1} \cdot (b y_g / s_{k,1} + z'_{k,1})} \\
            &= (\hat{h} \prod_{i=1}^m \hat{u}_i^{I_i})^{r_1} (\hat{w}^{\phi_1})^{c_1}
               (\hat{k}^a)^{-s_{k,1} z'_{k,1}}.
    \end{align*}
Note that it can only create a semi-functional private key since $z_{k,1} = b
y_g / s_{k,1} + z'_{k,1}$ enables the cancellation of $\hat{k}^{ab}$. The
semi-functional challenge ciphertext is correctly distributed as
    \begin{align*}
    C ~~~~  &= e(g, \hat{g})^{\alpha t} M_{\gamma}^*
             = e(k^{y_g}, \hat{k}^{y_g})^{ab c} M_{\gamma}^*
             = (T)^{y_g^2} M_{\gamma}^*, \\
    C_{1,1} &= g^t = (k^{y_g})^c = (k^c)^{y_g}, \\
    C_{1,2} &= (g^{\nu})^t f^{s_c}
             = (k^{y_g a})^c k^{-ac y_g + s'_c}
             = k^{s'_c}, \\
    C_{1,3} &= (g^{-\tau})^t (f^{-\phi_2})^{s_c}
             = (k^{-y_g (\phi_1 + a \phi_2)})^{c} k^{-\phi_2 (-ac y_g + s'_c)}
             = (k^c)^{-y_g \phi_1} k^{-\phi_2 s'_c},
             \displaybreak[0] \\
    C_{2,1} &= (h \prod_{i=1}^n u_i^{I_{\gamma,i}^*})^t
             = (k^{y_h} \prod_{i=1}^n k^{y_{u_i} I_{\gamma,i}^*})^c
             = (k^c)^{\Delta(ID_{\gamma}^*)}, \\
    C_{2,2} &= (h^{\nu} \prod_{i=1}^n (u_i^{\nu})^{I_{\gamma,i}^*})^t f^{s_c z_c}
             = (k^{y_h a} \prod_{i=1}^n k^{y_{u_i} a I_{\gamma,i}^*})^c
               k^{s_c (-ac \Delta(ID_{\gamma}^*) / s_c + z'_c / s_c)}
             = k^{z'_c},
             \displaybreak[0] \\
    C_{2,3} &= (h^{-\tau} \prod_{i=1}^n (u_i^{-\tau})^{I_{\gamma,i}^*})^t
               (f^{-\phi_2})^{s_c z_c} \\
            &= (k^{-y_h (\phi_1 + a \phi_2)} \prod_{i=1}^n
                k^{-y_{u_i} (\phi_1 + a \phi_2) I_{\gamma,i}^*})^c
                (k^{-\phi_2})^{s_c (-ac \Delta(ID_{\gamma}^*) / s_c + z'_c / s_c)}
             = (k^c)^{-\Delta(ID_{\gamma}^*) \phi_1} k^{-\phi_2 z'_c}.
    \end{align*}
Note that it can create a semi-functional ciphertext since $s_c, z_c$ enable
the cancellation of $k^{ac}$.
We next show that the distribution of the simulation using $D, T_1 =
e(k,\hat{k})^d$ is the same as $\textbf{G}_3$. It is obvious that $C$ is a
random element since $T_1 = e(k,\hat{k})^d$. From the above analysis, we
obtain $\Pr [\mcB_4 (D,T_0) = 0] - 1/2 = \Adv_{\mcA}^{G_2}$ and $\Pr [\mcB_4
(D,T_1) = 0] - 1/2 = \Adv_{\mcA}^{G_3}$. Thus, we can easily derive the
advantage of $\mcB_4$ as
    \begin{align*}
    \Adv_{\mcB_4}^{A4}(\lambda)
     =  \big| \Pr[\mcB_4 (D, T_0) = 0] - \Pr[\mcB_4 (D, T_1) = 0] \big|
     =  \big| \Adv_{\mcA}^{G_2} - \Adv_{\mcA}^{G_3} \big|.
    \end{align*}
This completes our proof.

\subsection{Proof of Lemma~\ref{lem:ahibe-prime-5}
(Indistinguishability of $\textbf{G}_3$ and $\textbf{G}_4$)}


In this proof, private keys and the challenge ciphertext are semi-functional
type-2 and semi-functional respectively, and the elements of the challenge
ciphertext should be well-formed or random depending on the $T$ value of the
given assumption. The idea to generate semi-functional type-2 private keys
and semi-functional ciphertexts is similar to Lemma~\ref{lem:ahibe-prime-4},
but it uses a different assumption. To prove anonymity, the simulator embeds
the $T$ value of the assumption into the all elements of the challenge
ciphertext that contains an identity.

\vs \textbf{Simulator}. Suppose there exists an adversary $\mcA$ that
distinguishes between $\textbf{G}_3$ and $\textbf{G}_4$ with a non-negligible
advantage. A simulator $\mcB_5$ that breaks Assumption 5 using $\mcA$ is
given: a challenge tuple
    $D = ((p, \G, \hat{\G}, \G_T, e),
    k, k^a, k^b, k^c, k^{ab}, k^{a^2 b}, \hat{k}, \hat{k}^a, \hat{k}^b)$ and
    $T$ where $T = T_0 = k^{abc}$ or $T = T_1 = k^d$.
Then $\mcB_5$ that interacts with $\mcA$ is described as follows:
$\mcB_5$ first chooses random exponents $\phi_1, \phi_2, \alpha \in \Z_p$ and
random blinding values $y_g, y_h, \{ y_{u_i} \}_{i=1}^l, y_w \in \Z_p$. It
sets
    $g = k^{y_g}, h = (k^{ab})^{y_h}, \{ u_i = (k^{ab})^{y_{u_i}} \}_{i=1}^l,
    \hat{g} = \hat{k}^{y_g}, \hat{w} = \hat{k}^{y_w},
    \hat{g}^{\alpha} = \hat{k}^{y_g \alpha}$.
It implicitly sets $\nu = a, \tau = \phi_1 + a \phi_2$ and
publishes the public parameters as
    \begin{align*}
    &   g,~ g^{\nu} = (k^a)^{y_g},~
        g^{-\tau} = k^{-y_g \phi_1} (k^a)^{-y_g \phi_2},~
        h,~ h^{\nu} = (k^{a^2b})^{y_h},~
        h^{-\tau} = (k^{ab})^{-y_h \phi_1} (k^{a^2b})^{-y_h \phi_2},~ \\
    &   \big\{
        u_i, u_i^{\nu} = (k^{a^2 b})^{y_{u_i}},
        u_i^{-\tau} = (k^{ab})^{-y_{u_i} \phi_1} (k^{a^2b})^{-y_{u_i} \phi_2}
        \big\}_{i=1}^l,~
        \hat{w}^{\phi_1}, \hat{w}^{\phi_2}, \hat{w},~
        \Omega = e(k, \hat{k})^{y_g^2 \alpha}.
    \end{align*}
It also implicitly sets $\hat{h} = (\hat{k}^{ab})^{y_h}, \{ \hat{u}_i =
(\hat{k}^{ab})^{y_{u_i}} \}$ for the master key, but it cannot create these
values since $\hat{k}^{ab}$ is not given. Additionally, it sets $f = k,
\hat{f} = \hat{k}$ for the semi-functional ciphertext and private key.
Let $\Delta(ID) = y_h + \sum_{i=1}^m y_{u_i} I_i$ where $ID = (I_1, \ldots,
I_m)$. $\mcA$ adaptively requests a private key for $ID = (I_1, \ldots,
I_m)$. To response the private key query, $\mcB_5$ first selects random
exponents $r_1, c_1, c_2, \{ c_{3,i} \}_{i=m+1}^l, s_{k,1}, z'_{k,1},
\linebreak[1] \{ z'_{k,2,i} \}_{i=m+1}^l \in \Z_p$. It implicitly sets
    $z_{k,1} = b \Delta(ID) r_1 / s_{k,1} + z'_{k,1},~
    \{ z_{k,2,i} = b y_{u_i} r_1 / s_{k,1} + z'_{k,2,i} \}_{i=m+1}^l$
and creates the decryption and delegation components of a semi-functional
private key as
    \begin{align*}
    &   K_{1,1} = \hat{g}^{\alpha} (\hat{w}^{\phi_1})^{c_1}
                  (\hat{k}^a)^{-s_{k,1} z'_{k,1}},~
        K_{1,2} = (\hat{w}^{\phi_2})^{c_1}
                  (\hat{k}^b)^{\Delta(ID) r_1}
                  \hat{k}^{s_{k,1} z'_{k,1}},~
        K_{1,3} = \hat{w}^{c_1}, \\
    &   K_{2,1} = \hat{g}^{r_1} (\hat{w}^{\phi_1})^{c_2} (\hat{k}^a)^{-s_{k,1}},~
        K_{2,2} = (\hat{w}^{\phi_2})^{c_2} \hat{k}^{s_{k,1}},~
        K_{2,3} = \hat{w}^{c_2},~ \\
    &   \big\{
        L_{3,i,1} = (\hat{w}^{\phi_1})^{c_{3,i}} (\hat{k}^a)^{-s_{k,1} z'_{k,2,i}},~
        L_{3,i,2} = (\hat{w}^{\phi_2})^{c_{3,i}} (\hat{k}^b)^{y_{u_i} r_1}
                    \hat{k}^{s_{k,1} z'_{k,2,i}},~
        L_{3,i,3} = \hat{w}^{c_{3,i}}
        \big\}_{i=m+1}^l.
    \end{align*}
Next, it selects random exponents $r_2, c_4, c_5, \{ c_{6,i} \}_{i=m+1}^l,
s_{k,2}, z'_{k,3}, \{ z'_{k,4,i} \}_{i=m+1}^l \in \Z_p$ and creates the
randomization components of a semi-functional private key by implicitly
setting
    $z_{k,3} = b \Delta(ID) r_2 / s_{k,2} + z'_{k,3},~
    \{ z_{k,4,i} = b y_{u_i} r_2 / s_{k,2} + z'_{k,4,i} \}_{i=m+1}^l$.
We omit the detailed description of these, since these are similar to the
decryption and delegation components except that $R_{1,1}$ does not have
$\hat{g}^{\alpha}$.
In the challenge step, $\mcA$ submits two challenge hierarchical identities
$ID_0^* = (I_{0,1}^*, \ldots, I_{0,n}^*), ID_1^* = (I_{1,1}^*, \ldots,
I_{1,n}^*)$ and two messages $M_0^*, M_1^*$. $\mcB_5$ flips a random coin
$\gamma \in \{0,1\}$ internally and chooses random exponents $\delta, s'_c,
z'_c \in \Z_p$. It implicitly sets
    $t = c,~ s_c = -ac y_g + s'_c,~
    z_c = -a^2bc \Delta(ID_{\gamma}^*) / s_c + abc z'_c / s_c$
and creates the semi-functional ciphertext as
    \begin{align*}
    &   C       = \Omega^{\delta} \cdot M_{\gamma}^*,~
        C_{1,1} = (k^c)^{y_g},~
        C_{1,2} = (k^a)^{s'_c},~
        C_{1,3} = (k^c)^{-y_g \phi_1} k^{-\phi_2 s'_c}, \\
    &   C_{2,1} = (T)^{\Delta(ID_{\gamma}^*)},~
        C_{2,2} = (T)^{z'_c},~
        C_{2,3} = (T)^{-\Delta(ID_{\gamma}^*) \phi_1}
                  (T)^{-z'_c \phi_2}.
    \end{align*}
Finally, $\mcA$ outputs a guess $\gamma'$. If $\gamma = \gamma'$, $\mcB_5$
outputs 0. Otherwise, it outputs 1.

\vs \textbf{Analysis}. We first show that the distribution of the simulation
using $D, T_0 = k^{abc}$ is the same as $\textbf{G}_3$. The public parameters
are correctly distributed since the random blinding values are used. The
semi-functional private key is correctly distributed as
    \begin{align*}
    K_{1,1} &= \hat{g}^{\alpha} (\hat{h} \prod_{i=1}^m \hat{u}_i^{I_i})^{r_1}
               (\hat{w}^{\phi_1})^{c_1} (\hat{f}^{-\nu})^{s_{k,1} z_{k,1}}
             = \hat{g}^{\alpha} (\hat{k}^{ab})^{\Delta(ID) r_1}
               (\hat{w}^{\phi_1})^{c_1}
               (\hat{k}^{-a})^{s_{k,1} (b \Delta(ID) r_1 / s_{k,1} + z'_{k,1})} \\
            &= \hat{g}^{\alpha} (\hat{w}^{\phi_1})^{c_1} (\hat{k}^a)^{-s_{k,1} z'_{k,1}},
            \displaybreak[0] \\
    K_{2,1} &= \hat{g}^{r_1} (\hat{w}^{\phi_1})^{c_2} (\hat{f}^{-\nu})^{s_{k,1}}
             = \hat{g}^{r_1} (\hat{w}^{\phi_1})^{c_2} (\hat{k}^a)^{-s_{k,1}}, \\
    L_{3,i,1} &= \hat{u}_i^{r_1} (\hat{w}^{\phi_1})^{c_{3,i}}
               (\hat{f}^{-\nu})^{s_{k,1} z_{k,2,i}}
             = (\hat{k}^{ab})^{y_{u_i} r_1} (\hat{w}^{\phi_1})^{c_{3,i}}
               (\hat{k}^{a})^{-s_{k,1} \cdot (b y_{u_i} r_1 / s_{k,1} + z'_{k,2,i})} \\
            &= (\hat{w}^{\phi_1})^{c_{3,i}} (\hat{k}^a)^{-s_{k,1} z'_{k,2,i}}.
    \end{align*}
Note that it can only create a semi-functional type-2 private key since
$z_{k,1}, \{z_{k,2,i} \}, z_{k,3}, \{ z_{k,4,i} \}$ enable the cancellation
of $\hat{k}^{ab}$. The semi-functional challenge ciphertext is correctly
distributed as
    \begin{align*}
    C_{1,1} &= g^t = (k^{y_g})^c = (k^c)^{y_g}, \\
    C_{1,2} &= (g^{\nu})^t f^{s_c}
             = (k^{y_g a})^c k^{-ac y_g + s'_c}
             = k^{s'_c}, \\
    C_{1,3} &= (g^{-\tau})^t (f^{-\phi_2})^{s_c}
             = (k^{-y_g (\phi_1 + a \phi_2)})^{c} k^{-\phi_2 (-ac y_g + s'_c)}
             = (k^c)^{-y_g \phi_1} k^{-\phi_2 s'_c},
             \displaybreak[0] \\
    C_{2,1} &= (h \prod_{i=1}^n u_i^{I_{\gamma,i}^*})^t
             = (k^{ab})^{\Delta(ID_{\gamma}^*) c}
             = (T_0)^{\Delta(ID_{\gamma}^*)},~ \\
    C_{2,2} &= (h \prod_{i=1}^n u_i^{I_{\gamma,i}^*})^{\nu t} f^{s_c z_c}
             = ((k^{ab})^{\Delta(ID_{\gamma}^*)})^{ac}
               k^{s_c (-a^2bc \Delta(ID_{\gamma}^*) / s_c + abc z'_c / s_c)}
             = (T_0)^{z'_c},~ \\
    C_{2,3} &= (h \prod_{i=1}^n u_i^{I_{\gamma,i}^*})^{-\tau t} (f^{-\phi_2})^{s_c z_c}
             = ((k^{ab})^{\Delta(ID_{\gamma}^*)})^{-(\phi_1 + a \phi_2)c}
               k^{-\phi_2 s_c (-a^2bc \Delta(ID_{\gamma}^*) / s_c + abc z'_c / s_c)} \\
            &= (T_0)^{- \Delta(ID_{\gamma}^*) \phi_1}
               (T_0)^{- z'_c \phi_2}.
    \end{align*}
Note that it can only create a semi-functional ciphertext since $s_c, z_c$
enable the cancellation of $k^{a^2bc}$.
We next show that the distribution of the simulation using $D, T_1 = k^d$ is
the same as $\textbf{G}_4$. We only consider $C_{2,1}, C_{2,2}, C_{2,3}$
components of the semi-functional challenge ciphertext since $T$ is used for
these components. If we implicitly sets $P = k^{\Delta(ID_{\gamma}^*) d/c}$
and $z_c = -ad \Delta(ID_{\gamma}^*) / s_c + d z'_c / s_c$, then the
semi-functional challenge ciphertext is correctly distributed as
    \begin{align*}
    C_{2,1} &= P^c
             = (k^{\Delta(ID_{\gamma}^*) d/c})^c
             = (T_1)^{\Delta(ID_{\gamma}^*)},~ \\
    C_{2,2} &= P^{\nu c} f^{s_c z_c}
             = (k^{\Delta(ID_{\gamma}^*) d/c})^{ac}
               k^{s_c (-ad \Delta(ID_{\gamma}^*) / s_c + d z'_c / s_c)}
             = (T_1)^{z'_c},~ \\
    C_{2,3} &= P^{-\tau c} (f^{-\phi_2})^{s_c z_c}
             = (k^{\Delta(ID_{\gamma}^*) d/c})^{-(\phi_1 + a \phi_2)c}
               k^{-\phi_2 s_c (-ad \Delta(ID_{\gamma}^*) / s_c + d z'_c / s_c)} \\
            &= (T_1)^{- \Delta(ID_{\gamma}^*) \phi_1}
               (T_1)^{- z'_c \phi_2}.
    \end{align*}
From the above analysis, we obtain $\Pr [\mcB_5(D,T_0) = 0] - 1/2 =
\Adv_{\mcA}^{G_3}$ and $\Pr [\mcB_5 (D,T_1) = 0] - 1/2 = \Adv_{\mcA}^{G_4}$.
Thus, we can easily derive the advantage of $\mcB_5$ as
    \begin{align*}
    \Adv_{\mcB_5}^{A5}(\lambda)
    &=  \big| \Pr[\mcB_5 (D, T_0) = 0] - \Pr[\mcB_5 (D, T_1) = 0] \big|
     =  \big| \Adv_{\mcA}^{G_3} - \Adv_{\mcA}^{G_4} \big|.
    \end{align*}
This completes our proof.

\section{Generic Group Model} \label{sec:gen-mod}

In this section, we prove that the new assumption of this paper is secure
under the generic group model. The generic group model was introduced by
Shoup \cite{Shoup97}, and it is a tool for analyzing generic algorithms that
work independently of the group representation. In the generic group model,
an adversary is given a random encoding of a group element or an arbitrary
index of a group element instead of the actual representation of a group
element. Thus, the adversary performs group operations through oracles that
are provided by a simulator, and the adversary only can check the equality of
group elements. The detailed explanation of the generic group model is given
in \cite{BonehBG05,KatzSW08}.

\subsection{Master Theorem}

To analyze the new assumption of this paper, we slightly modify the master
theorem of Katz et al. \cite{KatzSW08} since the new assumption is defined
over asymmetric bilinear groups of prime order.
Let $\G, \hat{\G}, \G_T$ be asymmetric bilinear groups of prime order $p$.
The bilinear map is defined as $e:\G \times \hat{\G} \rightarrow \G_T$. In
the generic group model, a random group element of $\G, \hat{\G}, \G_T$ is
represented as a random variable $P_i, Q_i, R_i$ respectively where $P_i,
Q_i, R_i$ are chosen uniformly in $\Z_p$. We say that a random variable has
degree $t$ if the maximum degree of any variable is $t$. The generalized
definition of dependence and independence is given as follows:

\begin{definition} \label{def-dep-indep}
Let $P = \{ P_1, \ldots, P_u \},~ T_0, T_1$ be random variables over $\G$
where $T_0 \neq T_1$, let $Q = \{ Q_1, \ldots, Q_w \}$ be random variables
over $\hat{\G}$, and let $R = \{ R_1, \ldots, R_v \}$ be random variables
over $\G_T$. Let $l = \max \{ u, w, v \}$. We say that $T_b$ is dependent on
$P$ if there exists constants $\alpha, \{\beta_i\}$ such that
    \begin{align*}
    \alpha \cdot T_b = \sum_{i=1}^u \beta_i \cdot P_i
    \end{align*}
where $\alpha \neq 0$. We say that $T_b$ is independent of $P$ if $T_b$ is
not dependent on $P$.
We say that $\{ e(T_b, Q_i) \}_{i}$ is dependent on $P \cup Q \cup R$ if
there exist constants $\{\alpha_{i}\}, \{\beta_{i,j}\}, \{\gamma_{i}\}$ such
that
    \begin{align*}
    \sum_{i=1}^w \alpha_{i} \cdot e(T_b, Q_i)
    =  \sum_{i=1}^u \sum_{j=1}^w \beta_{i,j} \cdot e(P_i, Q_j) +
       \sum_{i=1}^v \gamma_i \cdot R_i
    \end{align*}
where $\alpha_{i} \neq 0$ for at least one $i$. We say that $\{ e(T_b, Q_i)
\}_{i}$ is independent of $P \cup Q \cup R$ if $\{ e(T_b, Q_i) \}_{i}$ is not
dependent on $P \cup Q \cup R$.
\end{definition}

We can obtain the following theorem by using the above dependence and
independence of random variables.

\begin{theorem} \label{thm-master}
Let $P = \{ P_1, \ldots, P_u \},~ T_0, T_1$ be random variables over $\G$
where $T_0 \neq T_1$, let $Q = \{ Q_1, \ldots, Q_w \}$ be random variables
over $\hat{\G}$, and let $R = \{R_1, \ldots, R_v\}$ be random variables over
$\G_T$. Let $l = \max \{ u, w, v \}$. Consider the following experiment in
the generic group model:
\begin{quote}
An algorithm is given $P = \{ P_1, \ldots, P_u \}$, $Q = \{ Q_1, \ldots, Q_w
\}$, and $R = \{ R_1, \ldots, R_v \}$. A random bit $b$ is chosen, and the
adversary is given $T_b$. The algorithm outputs a bit $b'$, and succeeds if
$b'=b$. The algorithm's advantage is the absolute value of the difference
between its success probability and $1/2$.
\end{quote}
If $T_b$ is independent of $P$ for all $b \in \{0,1\}$, and $\{ e(T_b, Q_j)
\}_{j}$ is independent of $P \cup Q \cup R$ for all $b \in \{0,1\}$, then any
algorithm $\mcA$ issuing at most $q$ instructions has an advantage at most
$3(q+2l)^2 t/p$.
\end{theorem}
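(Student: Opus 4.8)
The plan is to follow the standard generic-group argument of Shoup, adapted to the asymmetric three-group setting as in Katz-Sahai-Waters. The idea is that in the generic group model the adversary learns nothing about the actual group elements beyond the equalities it can detect among the formal polynomials assigned to them; so I would track a list of formal rational functions (here, polynomials, since the assumption involves only products of the given generators) in the underlying random variables, one for each group element the adversary has seen, separately for $\G$, $\hat{\G}$, and $\G_T$. The simulator answers each group-operation and pairing query by appending the appropriate formal polynomial to the relevant list and assigning it a fresh random handle, reusing an old handle exactly when the new polynomial is formally identical to a previously recorded one. The adversary's view is then determined entirely by which pairs of formal polynomials it can force to collide.

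\textbf{Key steps.} First I would bound the number of distinct formal polynomials that can arise. The adversary starts with the $u+w+v$ given elements plus the challenge $T_b$, and issues at most $q$ instructions, each of which produces at most one new element; so the total number of formal polynomials across all three groups is at most $q+u+w+v+1 \le q+3l+1$. Second, I would replace the ``ideal'' game (where the simulator assigns genuinely independent handles and the bit $b$ is never actually used) by the ``real'' game (where the formal variables are instantiated by uniformly random values in $\Z_p$). These two games differ only if some pair of distinct formal polynomials happens to evaluate to the same value under the random instantiation. Third, for any fixed pair of distinct formal polynomials $F \neq G$ among those recorded, the difference $F-G$ is a nonzero polynomial of total degree at most $t$ (degree in $\G$ or $\hat{\G}$) or $2t$ (in $\G_T$, after a pairing), so by the Schwartz-Zippel lemma the probability that it vanishes at a random point is at most $(\text{deg})/p$. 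Summing over all $O((q+3l)^2)$ pairs gives a collision probability bounded by roughly $2(q+2l)^2 t / p$; absorbing the constant yields the stated $3(q+2l)^2 t/p$.

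\textbf{Where the independence hypotheses enter.} The two independence assumptions are exactly what guarantees that, in the ideal game, the challenge $T_b$ is assigned a handle that is \emph{unconstrained} by everything else the adversary sees, for \emph{both} values of $b$. Concretely, independence of $T_b$ over $P$ means the adversary cannot produce a $\G$-element whose formal polynomial equals a nonzero multiple of $T_b$ from the other $\G$-elements, so no equality test in $\G$ can separate $T_0$ from $T_1$; independence of $\{e(T_b,Q_j)\}_j$ over $P\cup Q\cup R$ means the same after a pairing into $\G_T$. Thus in the ideal game the adversary's success probability is exactly $1/2$, and the whole advantage is the collision term bounded above. The bilinearity of $e$ means $\G_T$-polynomials are sums of degree-$2t$ products $e(P_i,Q_j)$, so the pairing step only doubles the relevant degree, which is already accounted for in the bound.

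\textbf{The main obstacle} is verifying the independence conditions for this specific Assumption 5, since Theorem~\ref{thm-master} only reduces security \emph{to} those conditions. One must check that $T_0 = g^{abc}$ cannot be written as a $\Z_p$-linear combination of the monomials $\{1,a,b,c,ab,a^2b\}$ available in $\G$ (clearly it cannot, as $abc$ is a new monomial involving $c$ multiplicatively with $ab$), and that in $\G_T$ the products $e(T_b,Q_j)$ with $Q_j\in\{1,a,b\}$ — i.e.\ the monomials $abc, a^2bc, ab^2c$ — are not spanned by the pairwise products $e(P_i,Q_j)$ of the $\G$- and $\hat\G$-generators together with the $\G_T$-elements $R$; the random independent variable $d$ handles the $T_1$ case trivially. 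Carrying out this monomial-independence bookkeeping carefully, especially ensuring the augmented element $g^{a^2b}$ does not create an unexpected dependence, is the delicate part, but it is a finite linear-algebra check rather than a conceptual difficulty.
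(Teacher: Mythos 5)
Your proposal matches the paper's proof in both structure and substance: the same two-game hop between the concretely instantiated game and the purely formal-polynomial game, the same Schwartz--Zippel bound on collision events summed over pairs of handles in the three groups, and the same use of the two independence hypotheses to show the formal game gives zero advantage. The only divergences are cosmetic bookkeeping of the handle count (your $q+3l+1$ versus the paper's $q+2l$ per group over three groups) and your closing remarks on checking Assumption~5, which the paper defers to the following subsection rather than including in this theorem's proof.
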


\begin{proof}
The proof consists of a sequence of games. The first game will be the
original experiment that is described in the theorem and the last game will
be a game that the algorithm has no advantage. We define the games as
follows:
\begin{description}
\item [\textbf{Game} $\textbf{G}_1$.] This game is the original game. In
    this game, the simulator instantiates each of random variables $P, Q,
    R, T_b$ by choosing random values for each of the formal variables.
    Then it gives the handles of $P, Q, R, T_b$ to the algorithm $\mcA$.
    Next, $\mcA$ requests a sequence of multiplication, exponentiation,
    and pairing instructions, and is given the handles of results.
    Finally, $\mcA$ outputs a bit $b'$.
\item [\textbf{Game} $\textbf{G}_2$.] We slightly modify $\textbf{G}_1$
    into a new game $\textbf{G}_2$. In this game, the simulator never
    concretely instantiates the formal variables. Instead it keeps the
    formal polynomials themselves. Additionally, the simulator gives
    identical handles for two elements only if these elements are equal
    as formal polynomials in each of their components. That is, the
    simulator of this game assigns different handles for $X $ and $Y $
    since these are different polynomials. Note that the simulator of
    $\textbf{G}_1$ assigned the same handle for $X=(X_1, \ldots, X_n)$
    and $Y=(Y_1, \ldots, Y_n)$ if $X_i = Y_i$ for all $i$.
\end{description}

To prove the theorem, we will show that the statistical distance between two
games $\textbf{G}_1$ and $\textbf{G}_2$ is negligible and the advantage of
the algorithm in $\textbf{G}_2$ is zero. Then the advantage of the algorithm
in the original game is bounded by the statistical distance between two
games.

We first show that the statistical distance between two games $\textbf{G}_1$
and $\textbf{G}_2$ is negligible. The only difference between two games is
the case that two different formal polynomials take the same value by
concrete instantiation. The probability of this event is at most $t/p$ from
the Schwartz-Zippel Lemma \cite{Schwartz80}. If we consider all pairs of
elements produced by the algorithm $\mcA$, the statistical distance between
two games is at most $3(q+2l)^2 t/p$ since $\mcA$ can request at most $q$
instructions, the maximum size of handles in each group is at most $q+ 2l$,
and there are three different groups.

We next show that the advantage of the algorithm in $\textbf{G}_2$ is zero.
In this game, the algorithm $\mcA$ only can distinguish whether it is given
$T_0$ or $T_1$ if it can generate a formal polynomial that is symbolically
equivalent to some previously generated polynomial for one value of $b$ but
not the other. In this case, we have
    $\alpha \cdot T_b = \sum_{i=1}^u \beta_i \cdot P_i$
where $\alpha \neq 0$, or else we have
    $\sum_{i=1}^w \alpha_{i} \cdot e(T_b, Q_i)
    = \sum_{i=1}^u \sum_{j=1}^w \beta_{i,j} \cdot e(P_i, Q_j) +
      \sum_{i=1}^v \gamma_i \cdot R_i$
where $\alpha_{i} \neq 0$ for at least one $i$ (otherwise, symbolic equality
would hold for both value of $b$). However, the above equations are
contradict to the independence assumptions of the theorem. Therefore, the
advantage of $\mcA$ in this game is zero.
\end{proof}

\subsection{Analysis of Asymmetric 3-Party Diffie-Hellman}

To apply the master theorem of the previous section, we only need to show the
independence of $T_0, T_1$ random variables. Using the notation of previous
section, Assumption 5 (Asymmetric 3-Party Diffie-Hellman) can be written as
    \begin{align*}
    P   &= \{ 1, A, B, C, AB, A^2B \},~
    Q    = \{ 1, A, B \},~
    R    = \{ 1 \},~
    T_0  = ABC,~
    T_1  = D.
    \end{align*}

At first, we show the independence of $T_1$. It is trivial that $T_1$ is
independent of $P$ since a random variable $D$ does not exist in $P$. It is
easy to show that $\{ e(T_1, Q_i) \}_i$ is independent of $P \cup Q \cup R$
since $T_1$ contains a random variable $D$ that does not exist in $P, Q, R$.
Next, we show the independence of $T_0$. It is easy to show that $T_0$ is
independent of $P$ since the random variables with degree 3 are different. To
show the independence of $\{ e(T_0, Q_i) \}_i$, we can derive the sets of
random variables as
    \begin{align*}
    & \{ e(T_0, Q_j) \}_{j}   = \{ ABC, A^2BC, AB^2C \}, \\
    & \{ e(P_i, Q_j) \}_{i,j} = \{ 1, A, B, C, AB, A^2B, A^2, AC, A^3B, B^2,
                                   BC, AB^2, A^2B^2 \}, \\
    & \{ R_i \}               = \{ 1 \}.
    \end{align*}
The random variables of $\{ e(T_0, Q_i) \}_i$ always contain $C$ and the
degree of these random variables is greater than $3$. However, the random
variables of $\{ e(P_i, Q_j) \}_{i,j}$ that contain $C$ have the degree at
most $2$. Thus $\{ e(T_0, Q_i) \}_i$ is independent of $P \cup Q \cup R$.

\section{Conclusion}

In this paper, we proposed an efficient anonymous HIBE scheme with short
ciphertexts and proved its full model security under static assumptions.
Though our construction is based on the IBE scheme of Lewko and Waters
\cite{LewkoW10}, it was not trivial to construct an anonymous HIBE scheme,
since the randomization components of private keys cause a problem in the
security proof of dual system encryption. We leave it as an interesting
problem to construct a fully secure and anonymous HIBE scheme with short
ciphertexts under standard assumptions.

%

\bibliographystyle{plain}
\bibliography{full-ahibe-prime}

\end{document}